\providecommand{\U}[1]{\protect\rule{.1in}{.1in}}
\newtheorem{theorem}{Theorem}
\newtheorem{definition}[theorem]{Definition}
\newtheorem{notation}[theorem]{Notation}
\newtheorem{proposition}[theorem]{Proposition}
\newtheorem{remark}[theorem]{Remark}
\newenvironment{proof}[1][Proof]{\noindent\textbf{#1.} }{\ \rule{0.5em}{0.5em}}
\numberwithin{equation}{section}
\begin{document}

\title{Relative entropy of steering:\ On its definition and properties}
\author{Eneet Kaur\thanks{ Hearne Institute for Theoretical Physics, Department of
Physics and Astronomy, Louisiana State University, Baton Rouge, Louisiana
70803, USA}
\and Mark M. Wilde\footnotemark[1] \thanks{Center for Computation and Technology,
Louisiana State University, Baton Rouge, Louisiana 70803, USA}}
\date{\today}
\maketitle

\begin{abstract}
In [Gallego and Aolita, \textit{Physical Review X} \textbf{5}, 041008 (2015)],
the authors proposed a definition for the relative entropy of steering and
showed that the resulting quantity is a convex steering monotone. Here we
advocate for a different definition for relative entropy of steering, based on
well grounded concerns coming from quantum Shannon theory. We prove that this
modified relative entropy of steering is a convex steering monotone.
Furthermore, we establish that it is uniformly continuous and faithful, in
both cases giving quantitative bounds that should be useful in applications.
We also consider a restricted relative entropy of steering which is relevant
for the case in which the free operations in the resource theory of steering
have a more restricted form (the restricted operations could be more relevant
in practical scenarios). The restricted relative entropy of steering is
convex, monotone with respect to these restricted operations, uniformly
continuous, and faithful.

\end{abstract}

\section{Introduction}

Quantum steering corresponds to the scenario in which two parties, typically
called Alice and Bob, share a quantum state, and Alice can have an effect on
the state of Bob's system if she performs local measurements on hers
\cite{Einstein1935,Schroedinger1935,Wiseman2006,CS16}. For certain quantum
states, this effect cannot be explained in a classical way, and such states
are said to be steerable \cite{Wiseman2006}. Steerable states are necessarily
entangled but do not necessarily violate a Bell inequality \cite{CS16}.

Quantum steering is relevant as a resource in the context of one-sided
device-independent quantum key distribution \cite{Branciard2012}, in which the
goal is to distill secret key between Alice, who does not trust the quantum
device provided to her, and Bob, who trusts his quantum device. Motivated by
this, the authors of \cite{Gallego2015}\ developed a resource theory of
quantum steering, establishing free states in the resource theory as the
unsteerable ones and the free operations as one-way local operations and
classical communication (1W-LOCC), which preserve the free states. The same
authors also defined a steering monotone to be a function that does not
increase on average under 1W-LOCC, they proposed a definition for the relative
entropy of steering, and they proved that their proposed quantity is a
steering monotone.

The relative entropy of steering proposed in \cite{Gallego2015} can be
considered in a game-theoretic context with two players and the pay-off
function given by the quantum relative entropy. The relative entropy of
steering is a function of an assemblage $\{\hat{\rho}_{B}^{a,x}\}_{a,x}$,
defined to be the set of unnormalized states that result on Bob's system $B$
after Alice performs measurement $x\in\mathcal{X}$ and receives outcome
$a\in\mathcal{A}$, where $\mathcal{X}$ and $\mathcal{A}$ are finite alphabets.
That is, if Alice and Bob share the state $\rho_{AB}$ and Alice performs a
positive operator-valued measure (POVM)\ $\{\Lambda_{a}^{(x)}\}_{a}$ on her
system, where $\Lambda_{a}^{(x)}\geq0$ and $\sum_{a}\Lambda_{a}^{(x)}=I_{A}$,
then the resulting assemblage would be $\{\hat{\rho}_{B}^{a,x}%
=\operatorname{Tr}_{A}([\Lambda_{a}^{(x)}\otimes I_{B}]\rho_{AB})\}_{a,x}$.
The relative entropy of steering proposed in \cite{Gallego2015} quantifies how
distinguishable a given assemblage is from one that has a classical
description, in terms of the quantum relative entropy \cite{U62}. In
particular, let us say that Player~1's goal is to maximize the quantum
relative entropy between the two assemblages, and he is allowed to perform any
1W-LOCC\ operation in order to do so. Player~2's goal is to minimize the
quantum relative entropy by picking an assemblage that has a classical
description. Clearly, we have to pick an order in which the players take their
turns. In \cite{Gallego2015}, the authors had Player 2 go first, and then
Player~1 next. This means that Player~1 can react to the strategy of Player~2,
and in particular that the quantity in mathematical terms looks like (crudely)%
\begin{equation}
\inf_{\operatorname{LHS}}\sup_{\operatorname{1WLOCC}}D, \label{eq:old-rel-ent}%
\end{equation}
where $\operatorname{LHS}$ is the set of assemblages having a classical
description, $\operatorname{1WLOCC}$ is the set of 1W-LOCC\ operations, and
$D$ is the quantum relative entropy payoff function (we will define all of
this in much more detail later).

The main purpose of the present paper is to advocate for a different
definition of the relative entropy of steering in which the order of play
described above is exchanged, so that (crudely), the quantity we are proposing
is%
\begin{equation}
\sup_{\operatorname{1WLOCC}}\inf_{\operatorname{LHS}}D. \label{eq:new-rel-ent}%
\end{equation}
The interpretation is thus that Player~1 first acts to maximize $D$ by
\textquotedblleft playing\textquotedblright\ a 1W-LOCC\ operation, to which
Player~2 can react by \textquotedblleft playing\textquotedblright\ an
assemblage having a classical description. Our alternate definition for the
relative entropy of steering might seem like a minor modification, but we
offer three compelling reasons for our proposal:

\begin{enumerate}
\item The optimization order for the quantity in \eqref{eq:new-rel-ent}\ is
consistent with all previously known information-theoretic measures of dynamic
resources as considered in quantum Shannon theory \cite{W15book}, including
Holevo information of a channel \cite{H06}, mutual information of a channel
\cite{PhysRevA.56.3470,ieee2002bennett}, coherent information of a channel
\cite{PhysRevA.54.2629}, squashed entanglement of a channel \cite{TGW14IEEE},
Rains information of a channel \cite{TWW14}, etc.

\item The quantity in \eqref{eq:new-rel-ent}\ is never larger than that in
\eqref{eq:old-rel-ent}\ (due to the order of optimizations), and given that
the main application of relative entropic quantifiers in quantum Shannon
theory has been to get tight upper bounds on distillable entanglement or
secret key \cite{R01,HHHO05,HHHO09,TWW14,TBR15,PLOB15,WTB16}, we suspect that
the quantity in \eqref{eq:new-rel-ent}\ will be the right one to use in future applications.

\item The game-theoretic interpretation from \cite{vDGG05}\ would say that
\eqref{eq:new-rel-ent} quantifies the statistical strength of Player~1 to
convince Player~2 that the underlying assemblage demonstrates steering, and thus
represents a stronger measure or proof of the statistical strength of
steerability than does~\eqref{eq:old-rel-ent}.
\end{enumerate}

\noindent We elaborate more on the first point in
Section~\ref{sec:further-just}.

In the remainder of the paper, we review some preliminaries in
Section~\ref{sec:prelim}\ and provide a formal definition for our proposed
relative entropy of steering in Section~\ref{sec:rel-ent-steer}\ (we refer to
this quantity simply as \textquotedblleft the relative entropy of
steering\textquotedblright\ in the remainder of the paper). In
Section~\ref{sec:monotone}, we prove that the relative entropy of steering is
a steering monotone, and in Section~\ref{sec:convexity}\ we prove that it is a
convex function of the assemblage for which it is evaluated. Thus, the
relative entropy of steering is a convex steering monotone according to
\cite[Definition~2]{Gallego2015}. Section~\ref{sec:upper-bounds}\ establishes
upper bounds on the relative entropy of steering. In Section~\ref{sec:cont},
we define a metric for assemblages (\textquotedblleft trace distance of
assemblages\textquotedblright), and we prove that the relative entropy of
steering is uniformly continuous with respect to this metric (we give
quantitative continuity bounds). In Section~\ref{sec:faithful}, we prove that
the relative entropy of steering is faithful, and we give quantitative
faithfulness bounds.

As discussed in \cite{KWW16}, we can consider a restricted class of
1W-LOCC\ operations that might have more relevance in practical scenarios, in
which classical communication from Bob to Alice reaches Alice only after she
obtains the output of her black box. With this in mind, we define a restricted
relative entropy of steering, and we prove that it is a restricted steering
monotone, faithful, and uniformly continuous with respect to a metric relevant
for restricted 1W-LOCC.

\section{Preliminaries}

\label{sec:prelim}In the introduction, we discussed assemblages as arising
from a local measurement of Alice on a bipartite state that she shares with
Bob. However, the common approach in the steering literature \cite{CS16}, also
known as the one-sided device-independent approach, is to consider an
assemblage on its own, being defined as a set $\{\hat{\rho}_{B}^{a,x}\}_{a,x}$
of arbitrary positive semi-definite operators constrained by the no-signaling
principle. From the one-sided device-independent perspective, we think of
Alice's system as being a black box, taking a classical input $x\in
\mathcal{X}$ and producing a classical ouptut $a\in\mathcal{A}$, where
$\mathcal{X}$ and $\mathcal{A}$ are finite alphabets. The no-signaling
principle is that the reduced state of Bob's system should not depend on the
input $x$ to Alice's black box if the output $a$ is not available to him:%
\begin{equation}
\sum_{a}\hat{\rho}_{B}^{a,x}=\sum_{a}\hat{\rho}_{B}^{a,x^{\prime}}\quad\forall
x,x^{\prime}\in\mathcal{X}.\label{eq:no-sig-constr}%
\end{equation}
We can then define $\rho_{B}:=\sum_{a}\hat{\rho}_{B}^{a,x}$ and the last
constraint on an assemblage is that $\rho_{B}$ is a quantum state. With this
last constraint, we see that $\operatorname{Tr}(\hat{\rho}_{B}^{a,x})$ can be
interpreted as a conditional probability distribution $p_{\overline{A}|X}$, so that
$p_{\overline{A}|X}(a|x)=\operatorname{Tr}(\hat{\rho}_{B}^{a,x})$.

As discussed in \cite{KWW16}, one can think of an assemblage as being similar
to a quantum broadcast channel \cite{YHD2006}, accepting a classical input $x$
from a sender and producing a classical output $a$ with probability
$\operatorname{Tr}(\hat{\rho}_{B}^{a,x})$ for one receiver and a quantum
output $\hat{\rho}_{B}^{a,x}/\operatorname{Tr}(\hat{\rho}_{B}^{a,x})$ for the
other receiver if $\operatorname{Tr}(\hat{\rho}_{B}^{a,x})\neq0$. However,
this perspective is not fully complete, given that the quantum system $B$ is
accessible to Bob before the input $x$ is chosen. In any case, we say that an
assemblage is a \textit{dynamic resource} in the sense of \cite{DHW05RI}, in
that its behavior is modified depending on the input $x$.

An assemblage \textit{does not demonstrate steering} if arises from a classical, shared
random variable $\Lambda$ in the following sense~\cite{Wiseman2006}:
\begin{equation}
\hat{\rho}_{B}^{a,x}=\sum_{\lambda}p_{\Lambda}(\lambda)\ p_{\overline{A}|X\Lambda
}(a|x,\lambda)\ \rho_{B}^{\lambda},
\end{equation}
where $p_{\Lambda}(\lambda)$ is a probability distribution for $\Lambda$,
$p_{\overline{A}|X\Lambda}$ is a conditional probability distribution, and $\rho
_{B}^{\lambda}$ is a quantum state. The above structure indicates that the
correlations observed can be explained by a classical random variable
$\Lambda$, a copy of which is sent to both Alice and Bob, who then take
actions conditioned on a particular realization $\lambda$ of $\Lambda$. The
set of all assemblages that do not demonstrate steering is referred to as $\operatorname{LHS}$
(short for assemblages having a \textquotedblleft local-hidden-state
model\textquotedblright).

As discussed in the introduction, the most general free operations allowed in
the context of quantum steering are 1W-LOCC \cite{Gallego2015,KWW16}. As a
particular example, starting with a given assemblage $\{\hat{\rho}_{B}%
^{a,x}\}_{a,x}$, it is possible for Bob to perform a generalized measurement
on his system, specified as the following measurement channel acting on an
input state $\sigma_{B}$:
\begin{equation}
\mathcal{M}_{B\rightarrow B^{\prime}Y}(\sigma_{B}):=\sum_{y}\mathcal{K}%
_{y}(\sigma_{B})\otimes|y\rangle\langle y|_{Y},
\end{equation}
where each $\mathcal{K}_{y}$ is a completely positive trace-non-increasing
map, such that the sum map $\sum_{y}\mathcal{K}_{y}$ is trace preserving. Note
that each map $\mathcal{K}_{y}$ can be written as $\mathcal{K}_{y}(\sigma
_{B})=\sum_{t}K_{y,t}\sigma_{B}K_{y,t}^{\dag}$, such that $\sum_{y,t}%
K_{y,t}^{\dag}K_{y,t}=I$ and where each $K_{y,t}$ is a Kraus operator taking a
vector in $\mathcal{H}_{B}$ to a vector in $\mathcal{H}_{B^{\prime}}$. Also,
$\{|y\rangle\}_{y}$ denotes an orthonormal basis. Bob can then communicate the
classical result $y$ to Alice, who chooses the input $x$ to her black box
according to a classical channel $p_{X|Y}(x|y)$. The state after these
operations is
\begin{equation}
\rho_{X\overline{A}B^{\prime}Y}:=\sum_{a,x,y}p_{X|Y}(x|y)|x\rangle\langle
x|_{X}\otimes|a\rangle\langle a|_{\overline{A}}\otimes\mathcal{K}_{y}(\hat{\rho
}_{B}^{a,x})\otimes|y\rangle\langle y|_{Y},\label{eq:state-after-1W-LOCC}%
\end{equation}
where $\{|x\rangle\}_{x}$ and $\{|a\rangle\}_{a}$ denote orthonormal bases.

We now recall the defintion of quantum relative entropy, one of the main tools
used in this paper. The quantum relative entropy $D(\rho\Vert\sigma)$ accepts
two quantum states $\rho$ and $\sigma$ as input and outputs a non-negative
real number. It is defined as \cite{U62}%
\begin{equation}
D(\rho\Vert\sigma):=\operatorname{Tr}(\rho\left[  \log_{2}\rho-\log_{2}%
\sigma\right]  )
\end{equation}
if the support of $\rho$ is contained in the support of $\sigma$ and otherwise
it is set to $+\infty$. In the above definition, we take the common convention
that the operator logarithms are defined on the support of their arguments.
The most critical property of quantum relative entropy is that it is monotone
with respect to a quantum channel $\mathcal{N}$ \cite{Lindblad1975,U77}, in
the sense that%
\begin{equation}
D(\rho\Vert\sigma)\geq D(\mathcal{N}(\rho)\Vert\mathcal{N}(\sigma)).
\end{equation}
The quantum relative entropy obeys the following property:%
\begin{equation}
D\!\left(  \sum_{x}r(x)|x\rangle\langle x|\otimes\lambda^{x}\middle\Vert
\sum_{x}s(x)|x\rangle\langle x|\otimes\mu^{x}\right)  =\sum_{x}r(x)D(\lambda
^{x}\Vert\mu^{x})+D(r\Vert s), \label{eq:rel-ent-block-prop}%
\end{equation}
which holds for probability distributions $r$ and $s$, sets of density
operators $\{\lambda^{x}\}_{x}$ and $\{\mu^{x}\}_{x}$, and an orthonormal basis $\{|x\rangle
\}_{x}$. Note that if we write $D(p\Vert q)$ for
probability distributions $p$ and $q$, then it is implicit that these
distributions are encoded along the diagonal of a density operator, so that
the corresponding states are commuting.

The quantum entropy is defined as $H(G)_{\kappa}:= H(\kappa_{G}%
):=-\operatorname{Tr}(\kappa_{G}\log_{2}\kappa_{G})$ for a state $\kappa_{G}$
on system $G$.

\section{Relative entropy of steering}

\label{sec:rel-ent-steer}In this section, we first give our proposed
definition of relative entropy of steering. We then show that it is a convex
steering monotone. The subsections thereafter establish upper bounds on it,
the trace distance of assemblages as a metric on assemblages, uniform
continuity of the relative entropy of steering, and its faithfulness.

\begin{definition}
[Relative entropy of steering]\label{def:rel-ent-steer}Let $\{\hat{\rho}%
_{B}^{a,x}\}_{a,x}$ denote an assemblage. We define the relative entropy of
steering as follows:%
\begin{equation}
R_{S}(\overline{A};B)_{\hat{\rho}}:=\sup_{\{p_{X|Y},\{\mathcal{K}_{y}\}_{y}\}}%
\inf_{\{\hat{\sigma}_{B}^{a,x}\}_{a,x}\in\operatorname{LHS}}D(\rho_{X\overline
{A}B^{\prime}Y}\Vert\sigma_{X\overline{A}B^{\prime}Y}),
\end{equation}
where%
\begin{align}
\rho_{X\overline{A}B^{\prime}Y} &  :=\sum_{x,a,y}p_{X|Y}(x|y)|x\rangle\langle
x|_{X}\otimes|a\rangle\langle a|_{\overline{A}}\otimes\mathcal{K}_{y}(\hat{\rho
}_{B}^{a,x})\otimes|y\rangle\langle y|_{Y},\\
\sigma_{X\overline{A}B^{\prime}Y} &  :=\sum_{x,a,y}p_{X|Y}(x|y)|x\rangle\langle
x|_{X}\otimes|a\rangle\langle a|_{\overline{A}}\otimes\mathcal{K}_{y}(\hat{\sigma
}_{B}^{a,x})\otimes|y\rangle\langle y|_{Y},
\end{align}
$\operatorname{LHS}$ denotes the set of all assemblages having a
local-hidden-state model, and $\{p_{X|Y},\{\mathcal{K}_{y}\}_{y}\}$ denotes a
1W-LOCC\ operation as described in \eqref{eq:state-after-1W-LOCC}.
\end{definition}

\begin{remark}
By using the property of relative entropy recalled in
\eqref{eq:rel-ent-block-prop}, the definition of relative entropy of steering
given in \cite{Gallego2015} can be written as%
\begin{equation}
\inf_{\{\hat{\sigma}_{B}^{a,x}\}_{a,x}\in\operatorname{LHS}}\sup
_{\{p_{X|Y},\{\mathcal{K}_{y}\}_{y}\}}D(\rho_{X\overline{A}B^{\prime}Y}\Vert
\sigma_{X\overline{A}B^{\prime}Y}),
\end{equation}
with the symbols involved defined as above.
\end{remark}

\subsection{Justification for Definition~\ref{def:rel-ent-steer}}

\label{sec:further-just}We gave three reasons in the introduction that
advocate for Definition~\ref{def:rel-ent-steer}\ to be the relative entropy of
steering over the definition given in \cite{Gallego2015}. We now elaborate on
the first reason, which is that the order of optimizations in
Definition~\ref{def:rel-ent-steer} is consistent with the order of
optimizations given in all known information-theoretic measures of a dynamic
quantum resource. Since an assemblage is a dynamic resource as discussed in
Section~\ref{sec:prelim}, we see no strong reason why the order of
optimizations in the relative entropy of steering should not be consistent
with all of these other measures.

We first briefly recall some definitions. The quantum mutual information and
coherent information of a bipartite state $\rho_{AB}$ can be defined,
respectively, as%
\begin{align}
I(A;B)_{\rho}  &  :=\inf_{\sigma_{B}}D(\rho_{AB}\Vert\rho_{A}\otimes\sigma
_{B}),\\
I(A\rangle B)_{\rho}  &  :=\inf_{\sigma_{B}}D(\rho_{AB}\Vert I_{A}%
\otimes\sigma_{B}),
\end{align}
where the optimizations are with respect to a quantum state $\sigma_{B}$ (see,
e.g., \cite[Section~11.8.1]{W15book}). The conditional mutual information of a
tripartite state $\rho_{ABE}$ can be defined as%
\begin{equation}
I(A;B|E)_{\rho}:=I(A;BE)_{\rho}-I(A;E)_{\rho}.
\end{equation}

A dynamic resource of primary interest in quantum Shannon theory is a quantum
channel $\mathcal{N}_{A\rightarrow B}$, which accepts a state on an input
quantum system $A$ and physically transforms it to a state on an output
quantum system $B$. One of the main goals of quantum Shannon theory is to
determine capacities of a quantum channel for various communication tasks. The
result of many years of effort is that different functions of a quantum
channel characterize its different capacities. For example, the classical
capacity is characterized by the Holevo information
\cite{Hol98,PhysRevA.56.131,H06}, the entanglement-assisted capacity by the
channel's mutual information \cite{PhysRevLett.83.3081,ieee2002bennett}, and
the quantum capacity by the channel's coherent information
\cite{PhysRevA.55.1613,capacity2002shor,ieee2005dev}, respectively defined as%
\begin{align}
\sup_{\rho_{XA}}I(X;B)_{\mathcal{N}(\rho)}  &  =\sup_{\rho_{XA}}\inf
_{\sigma_{B}}D(\mathcal{N}_{A\rightarrow B}(\rho_{XA})\Vert\rho_{X}%
\otimes\sigma_{B}),\\
\sup_{\rho_{RA}}I(R;B)_{\mathcal{N}(\rho)}  &  =\sup_{\rho_{RA}}\inf
_{\sigma_{B}}D(\mathcal{N}_{A\rightarrow B}(\rho_{RA})\Vert\rho_{R}%
\otimes\sigma_{B}),\\
\sup_{\rho_{RA}}I(R\rangle B)_{\mathcal{N}(\rho)}  &  =\sup_{\rho_{RA}}%
\inf_{\sigma_{B}}D(\mathcal{N}_{A\rightarrow B}(\rho_{RA})\Vert I_{R}%
\otimes\sigma_{B}).
\end{align}
In the first line, there is a constraint that system $X$ is a classical system
while system $A$ is quantum. In the last two expressions, systems $R$ and $A$
are quantum. The expressions on the right-hand side indicate that the
information quantities can be thought of as a comparison between the output of
the actual channel and the output of a useless channel, which is one that
traces out the input system $A$ and replaces it with the state $\sigma_{B}$.
We see in each case that the order of optimization is critically taken to be
such that the maximizing player goes first, inputting a state intended to give
the best possible discrimination between the channel $\mathcal{N}%
_{A\rightarrow B}$ of interest and a useless channel. The minimizing player
goes second, being able to react to the play of the maximizer by choosing the
worst possible useless channel depending on the state $\mathcal{N}%
_{A\rightarrow B}(\rho_{XA})$ or $\mathcal{N}_{A\rightarrow B}(\rho_{RA})$.

Other information measures that have been used to give upper bounds on
communication tasks include the squashed entanglement of a channel
\cite{TGW14IEEE}, the Rains information of a channel \cite{TWW14}, and a
channel's relative entropy of entanglement \cite{TWW14,PLOB15,WTB16}. These
are defined respectively as%
\begin{align}
&  \sup_{\psi_{RA}}\inf_{\mathcal{S}_{E\rightarrow E^{\prime}}}I(A;B|E^{\prime
})_{\omega},\\
&  \sup_{\rho_{RA}}\inf_{\sigma_{AB}\in\operatorname{PPT}^{\prime}%
}D(\mathcal{N}_{A\rightarrow B}(\rho_{RA})\Vert\sigma_{AB}),\\
&  \sup_{\rho_{RA}}\inf_{\sigma_{AB}\in\operatorname{SEP}}D(\mathcal{N}%
_{A\rightarrow B}(\rho_{RA})\Vert\sigma_{AB}),
\end{align}
where in the squashed entanglement of a channel, we take $\omega_{ABE^{\prime
}}:=\mathcal{S}_{E\rightarrow E^{\prime}}(\mathcal{U}_{A\rightarrow
BE}^{\mathcal{N}}(\psi_{RA}))$, with $\psi_{RA}$ a pure state, $\mathcal{U}%
_{A\rightarrow BE}^{\mathcal{N}}$ a fixed isometric extension of the channel
$\mathcal{N}_{A\rightarrow B}$, and $\mathcal{S}_{E\rightarrow E^{\prime}}$ a
channel known as a squashing channel. In the latter two lines,
$\operatorname{PPT}^{\prime}$ is a set of subnormalized states related to and
containing the positive-partial-transpose (PPT)\ states, and
$\operatorname{SEP}$ denotes the set of separable, unentangled states. Thus,
the interpretation is the same as above: an input to the channel is chosen and
then an adversary reacts to this input by trying to minimize the
discrimination measure. Note that the latter two quantities have found
application as upper bounds on quantum capacity and private capacity, in part
because they involve a comparison with a PPT\ state, which is useless for
quantum communication \cite{R01}, and with a separable state, which is useless
for private communication \cite{CLL04,HHHO05,HHHO09}.

Thus, given the above list of information measures which have found extensive
use throughout quantum Shannon theory and given that each of them have the
optimization order as $\sup\inf$, we suspect that this optimization order will
be the right approach to take for the relative entropy of steering. Note also
that, similar to all of the above information measures, the relative entropy
of steering involves a comparison between a given assemblage and another which
is useless in the context of steering, in the sense that the latter has a
local-hidden-state model and thus does not demonstrate steering.

\subsection{Steering monotone\label{sec:monotone}}

We now prove that the relative entropy of steering is a steering monotone,
however deferring the faithfulness proof until Section~\ref{sec:faithful}:

\begin{theorem}
[Steering monotone]Let $\{\hat{\rho}_{B}^{a,x}\}_{a,x}$ be an assemblage, and
suppose that%
\begin{equation}
\left\{  \hat{\rho}_{B_{f},z}^{a_{f},x_{f}}:=\sum_{a,x}p(a_{f}|x_{f}%
,x,a,z)p(x|x_{f},z)\mathcal{K}_{z}(\hat{\rho}_{B}^{a,x})/p(z)\right\}
_{a_{f},x_{f}},
\end{equation}
is an assemblage that arises from it by the action of a general
1W-LOCC\ operation (see \cite[Definition~1]{Gallego2015} and \cite{KWW16}),
where%
\begin{equation}
p(z):=\operatorname{Tr}\!\left(  \mathcal{K}_{z}\!\left(  \sum_{a}\hat{\rho
}_{B}^{a,x}\right)  \right)  =\operatorname{Tr}(\mathcal{K}_{z}(\rho_{B})).
\end{equation}
Then%
\begin{equation}
\sum_{z}p(z)R_{S}(\overline{A}_{f};B_{f})_{\hat{\rho}_{z}}\leq R_{S}(\overline
{A};B)_{\hat{\rho}}.\label{eq:1W-LOCC-monotone}%
\end{equation}

\end{theorem}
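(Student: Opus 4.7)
The strategy is to exploit the supremum structure in Definition~\ref{def:rel-ent-steer} by exhibiting, for each $\epsilon > 0$, a specific 1W-LOCC on the original assemblage whose relative-entropy value approximates the left-hand side of \eqref{eq:1W-LOCC-monotone}. First, for each $z$ with $p(z) > 0$, pick an $\epsilon$-optimal 1W-LOCC $(p^{(z)}(x_f|y_f), \{\mathcal{K}_{y_f}^{(z)}\}_{y_f})$ for $R_S(\overline{A}_f; B_f)_{\hat{\rho}_z}$, and compose it with the 1W-LOCC that already transforms $\hat{\rho}$ into $\{\hat{\rho}_z\}_z$. On Bob's side this composition is the instrument $\{\mathcal{K}_{y_f}^{(z)} \circ \mathcal{K}_z\}_{z,y_f}$ with compound classical outcome $(z, y_f)$; on Alice's side the conditional channel is $p(x|y_f, z) := \sum_{x_f} p^{(z)}(x_f|y_f)\, p(x|x_f, z)$. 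This is an allowed 1W-LOCC in the sense of \eqref{eq:state-after-1W-LOCC}, so letting $\tau$ be the resulting state and $\tau_\nu$ its analogue built from an arbitrary LHS assemblage $\hat{\nu}$, the definition of $R_S$ gives $R_S(\overline{A}; B)_{\hat{\rho}} \geq \inf_{\hat{\nu} \in \operatorname{LHS}} D(\tau \Vert \tau_\nu)$.

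The subtlety is that the 1W-LOCC in Definition~\ref{def:rel-ent-steer} does not include Alice's post-processing $p(a_f|x_f, x, a, z)$ nor a register holding $x_f$, whereas the target state $\rho' := \sum_z p(z)\,|z\rangle\langle z|_Z \otimes \rho^{(z)}_{X_f \overline{A}_f B'' Y_f}$ corresponding to the left-hand side of \eqref{eq:1W-LOCC-monotone} does contain those registers. I would bridge this gap by applying to $\tau$ the classical channel that generates $X_f$ from $(X, Z, Y_f)$ via the Bayes-rule inverse $p^{(z)}(x_f|y_f)\, p(x|x_f, z) / p(x|y_f, z)$, then generates $\overline{A}_f$ from $(X_f, X, \overline{A}, Z)$ via $p(a_f|x_f, x, a, z)$, and finally discards $X$ and $\overline{A}$. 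A direct expansion shows this channel maps $\tau \mapsto \rho'$ and $\tau_\nu \mapsto \sigma'(\nu)$, where $\sigma'(\nu)$ has the same form as $\rho'$ but with $\hat{\rho}$ replaced by the assemblage $\hat{\sigma}_z$ obtained by applying the same 1W-LOCC data to $\hat{\nu}$; since 1W-LOCC maps LHS to LHS, each $\hat{\sigma}_z$ is itself LHS. Monotonicity of relative entropy under the post-processing channel then yields $D(\tau \Vert \tau_\nu) \geq D(\rho' \Vert \sigma'(\nu))$.

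Finally, I would invoke \eqref{eq:rel-ent-block-prop} to split $D(\rho' \Vert \sigma'(\nu)) = \sum_z p(z)\, D(\rho^{(z)} \Vert \sigma^{(z)}(\nu))$, bound each summand below by $\inf_{\hat{\sigma}_z \in \operatorname{LHS}} D(\rho^{(z)} \Vert \sigma^{(z)})$ using that $\hat{\sigma}_z(\nu)$ is LHS, and apply the $\epsilon$-optimal choice of step~1 to conclude $\sum_z p(z)\inf_{\hat{\sigma}_z} D(\rho^{(z)} \Vert \sigma^{(z)}) \geq \sum_z p(z)\,R_S(\overline{A}_f; B_f)_{\hat{\rho}_z} - \epsilon$. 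Chaining all the inequalities and letting $\epsilon \to 0$ delivers \eqref{eq:1W-LOCC-monotone}. I expect the main obstacle to be the middle step: verifying by a somewhat tedious direct calculation that the Bayesian classical post-processing really sends $\tau$ to $\rho'$ on the nose, and cleanly arguing that the composite map sends any local-hidden-state assemblage back into the LHS set so that the induced $\hat{\sigma}_z$ are admissible in the infimum.
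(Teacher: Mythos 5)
Your proposal is correct and follows essentially the same route as the paper's proof: compose the given 1W-LOCC with (near-)optimal $z$-dependent follow-up strategies, push an arbitrary LHS assemblage through the very same operation (it stays LHS by Gallego--Aolita's Theorem~1), and relate the resulting relative entropies via data processing and the block identity \eqref{eq:rel-ent-block-prop} --- the paper merely runs the chain in the opposite direction, keeps $X_f$ as a register Alice generates locally (so no Bayes inversion is needed), and takes the supremum over strategies at the end rather than selecting $\epsilon$-optimal ones up front. The one small slip is that \eqref{eq:rel-ent-block-prop} actually gives $D(\rho'\Vert\sigma'(\nu))=\sum_z p(z)\,D(\rho^{(z)}\Vert\sigma^{(z)}(\nu))+D(p\Vert q)$ rather than the bare sum, but since $D(p\Vert q)\geq 0$ this only strengthens the lower bound you need.
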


\begin{proof}
Let $\{\hat{\sigma}_{B}^{a,x}\}_{a,x}$ be an LHS\ assemblage, and suppose that%
\begin{equation}
\left\{  \hat{\sigma}_{B_{f},z}^{a_{f},x_{f}}:=\sum_{a,x}p(a_{f}%
|x_{f},x,a,z)p(x|x_{f},z)\mathcal{K}_{z}(\hat{\sigma}_{B}^{a,x})/q(z)\right\}
_{a_{f},x_{f}},
\end{equation}
is an LHS\ assemblage that arises from it by the action of the same
1W-LOCC\ operation as above, where%
\begin{equation}
q(z):=\operatorname{Tr}\!\left(  \mathcal{K}_{z}\!\left(  \sum_{a}\hat{\sigma
}_{B}^{a,x}\right)  \right)  =\operatorname{Tr}(\mathcal{K}_{z}(\sigma_{B})).
\end{equation}
The assemblage $\{\hat{\sigma}_{B_{f},z}^{a_{f},x_{f}}\}_{a,x}$ is guaranteed
to be an LHS\ assemblage by \cite[Theorem~1]{Gallego2015}. Consider that, in
accordance with the definition of $R_{S}(\overline{A}_{f};B_{f})_{\hat{\rho}_{z}}$,
the assemblages $\{\hat{\rho}_{B_{f},z}^{a_{f},x_{f}}\}_{a_{f},x_{f}}$ and
$\{\hat{\sigma}_{B_{f},z}^{a_{f},x_{f}}\}_{a_{f},x_{f}}$ can be further
preprocessed by a $z$-dependent 1W-LOCC $\{p_{X_{f}|YZ=z},\{\mathcal{L}%
_{y}^{(z)}\}_{y}\}$, resulting in the following states:%
\begin{align}
\omega_{X_{f}\overline{A}_{f}B_{f}^{\prime}Y}^{z} &  :=\sum_{a_{f},x_{f},y}%
p(x_{f}|y,z)[x_{f}]\otimes\lbrack a_{f}]\otimes\mathcal{L}_{y}^{(z)}(\hat{\rho
}_{B_{f},z}^{a_{f},x_{f}})\otimes\lbrack y],\label{eq:smaller-state-def}\\
\tau_{X_{f}\overline{A}_{f}B_{f}^{\prime}Y}^{z} &  :=\sum_{a_{f},x_{f},y}%
p(x_{f}|y,z)[x_{f}]\otimes\lbrack a_{f}]\otimes\mathcal{L}_{y}^{(z)}%
(\hat{\sigma}_{B_{f},z}^{a_{f},x_{f}})\otimes\lbrack y].
\end{align}

\begin{notation}
In the above and in what follows, we employ a shorthand $[x]\equiv
|x\rangle\langle x|_{X}$ or $[a]\equiv|a\rangle\langle a|_{\overline{A}}$, etc.
\end{notation}

\noindent The above states can be embedded in the following ones:%
\begin{align}
\omega_{X_{f}\overline{A}_{f}B_{f}^{\prime}YZ} &  :=\sum_{z}\omega_{X_{f}\overline
{A}_{f}B_{f}^{\prime}Y}^{z}\otimes p(z)[z],\\
\tau_{X_{f}\overline{A}_{f}B_{f}^{\prime}YZ} &  :=\sum_{z}\tau_{X_{f}\overline{A}%
_{f}B_{f}^{\prime}Y}^{z}\otimes q(z)[z].
\end{align}
The states above are extended by the following ones:%
\begin{align}
\omega_{X_{f}X\overline{A}_{f}\overline{A}B_{f}^{\prime}Y}^{z} &  :=\sum_{a_{f}%
,a,x,x_{f},y}p(x_{f}|y,z)[x_{f}]\otimes p(x|x_{f},z)[x]\otimes p(a_{f}%
|x_{f},x,a,z)[a_{f}]\nonumber\\
&  \qquad\qquad\otimes\lbrack a]\otimes\frac{\mathcal{L}_{y}^{(z)}%
(\mathcal{K}_{z}(\hat{\rho}_{B}^{a,x}))}{p(z)}\otimes\lbrack y],\\
\tau_{X_{f}X\overline{A}_{f}\overline{A}B_{f}^{\prime}Y}^{z} &  :=\sum_{a_{f}%
,a,x,x_{f},y}p(x_{f}|y,z)[x_{f}]\otimes p(x|x_{f},z)[x]\otimes p(a_{f}%
|x_{f},x,a,z)[a_{f}]\nonumber\\
&  \qquad\qquad\otimes\lbrack a]\otimes\frac{\mathcal{L}_{y}^{(z)}%
(\mathcal{K}_{z}(\hat{\sigma}_{B}^{a,x}))}{q(z)}\otimes\lbrack y],
\end{align}
which in turn are elements of the following classical--quantum states:%
\begin{align}
\omega_{X_{f}X\overline{A}_{f}\overline{A}B_{f}^{\prime}YZ} &  :=\sum_{z}\omega
_{X_{f}X\overline{A}_{f}\overline{A}B_{f}^{\prime}Y}^{z}\otimes p(z)[z],\\
\tau_{X_{f}X\overline{A}_{f}\overline{A}B_{f}^{\prime}YZ} &  :=\sum_{z}\tau_{X_{f}%
X\overline{A}_{f}\overline{A}B_{f}^{\prime}Y}^{z}\otimes q(z)[z].
\end{align}
Consider that%
\begin{align}
&  \!\!\!\!\!\!\sum_{z}p(z)\inf_{\hat{\zeta}^{z}\in\operatorname{LHS}}%
D(\omega_{X_{f}\overline{A}_{f}B_{f}^{\prime}Y}^{z}\Vert\zeta_{X_{f}\overline{A}%
_{f}B_{f}^{\prime}Y}^{z})\nonumber\\
&  \leq\sum_{z}p(z)D(\omega_{X_{f}\overline{A}_{f}B_{f}^{\prime}Y}^{z}\Vert
\tau_{X_{f}\overline{A}_{f}B_{f}^{\prime}Y}^{z})\\
&  \leq\sum_{z}p(z)D(\omega_{X_{f}\overline{A}_{f}B_{f}^{\prime}Y}^{z}\Vert
\tau_{X_{f}\overline{A}_{f}B_{f}^{\prime}Y}^{z})+D(p\Vert q)\\
&  =D(\omega_{X_{f}\overline{A}_{f}B_{f}^{\prime}YZ}\Vert\tau_{X_{f}\overline{A}%
_{f}B_{f}^{\prime}YZ})\\
&  \leq D(\omega_{X_{f}X\overline{A}_{f}\overline{A}B_{f}^{\prime}YZ}\Vert\tau
_{X_{f}X\overline{A}_{f}\overline{A}B_{f}^{\prime}YZ})\\
&  =D(\omega_{X_{f}X\overline{A}B_{f}^{\prime}YZ}\Vert\tau_{X_{f}X\overline{A}%
B_{f}^{\prime}YZ}).
\end{align}
In the first line, we take $\hat{\zeta}^{z}$ to denote a general
LHS\ assemblage $\{\hat{\zeta}_{B_{f}}^{a_{f},x_{f},z}\}_{a_{f},x_{f}}$ and
$\zeta_{X_{f}\overline{A}_{f}B_{f}^{\prime}Y}^{z}$ denotes the following state:%
\begin{equation}
\zeta_{X_{f}\overline{A}_{f}B_{f}^{\prime}Y}^{z}:=\sum_{a_{f},x_{f},y}%
p(x_{f}|y,z)[x_{f}]\otimes\lbrack a_{f}]\otimes\mathcal{L}_{y}^{(z)}(\hat
{\zeta}_{B_{f}}^{a_{f},x_{f},z})\otimes\lbrack y].
\end{equation}
The first inequality follows by considering that the state $\tau_{X_{f}\overline
{A}_{f}B_{f}^{\prime}Y}^{z}$ arises from the action of the $z$-dependent
1W-LOCC\ operation $\{p_{X_{f}|YZ=z},\{\mathcal{L}_{y}^{(z)}\}_{y}\}$\ on the
LHS\ assemblage $\{\hat{\sigma}_{B_{f},z}^{a_{f},x_{f}}\}_{a_{f},x_{f}}$. The
second inequality follows from non-negativity of relative entropy. The first
equality is a consequence of the property of relative entropy recalled in
\eqref{eq:rel-ent-block-prop}. The final inequality follows from the data
processing inequality for quantum relative entropy, and the final equality
follows because the random variable in $\overline{A}_{f}$ for each state
$\omega_{X_{f}X\overline{A}_{f}\overline{A}B_{f}^{\prime}YZ}$ and $\tau_{X_{f}X\overline
{A}_{f}\overline{A}B_{f}^{\prime}YZ}$\ is produced by the same classical channel
$p(a_{f}|x_{f},x,a,z)$, so that we get the inequality $D(\omega_{X_{f}X\overline
{A}_{f}\overline{A}B_{f}^{\prime}YZ}\Vert\tau_{X_{f}X\overline{A}_{f}\overline{A}%
B_{f}^{\prime}YZ})\leq D(\omega_{X_{f}X\overline{A}B_{f}^{\prime}YZ}\Vert
\tau_{X_{f}X\overline{A}B_{f}^{\prime}YZ})$ by data processing and the opposite
inequality follows by taking a partial trace over system $\overline{A}_{f}$.

We have shown that the above chain of inequalities holds for all assemblages
$\{\hat{\sigma}_{B}^{a,x}\}_{a,x}\in\operatorname{LHS}$, and so we can
conclude that%
\begin{equation}
\sum_{z}p(z)\inf_{\zeta^{z}\in\operatorname{LHS}}D(\omega_{X_{f}\overline{A}%
_{f}B_{f}^{\prime}Y}^{z}\Vert\zeta_{X_{f}\overline{A}_{f}B_{f}^{\prime}Y}^{z}%
)\leq\inf_{\{\hat{\sigma}_{B}^{a,x}\}_{a,x}\in\operatorname{LHS}}%
D(\omega_{X_{f}X\overline{A}B_{f}^{\prime}YZ}\Vert\tau_{X_{f}X\overline{A}B_{f}^{\prime
}YZ}).
\end{equation}
The above inequality holds for all 1W-LOCC\ strategies $\{p_{X_{f}%
|YZ=z},\{\mathcal{L}_{y}^{(z)}\}_{y}\}_{z}$, so we can now take a supremum
over all such\ strategies $\{p_{X_{f}|YZ=z},\{\mathcal{L}_{y}^{(z)}%
\}_{y}\}_{z}$ to find that%
\begin{align}
&  \!\!\!\!\!\!\sup_{\{p_{X_{f}|YZ=z},\{\mathcal{L}_{y}^{(z)}\}_{y}\}_{z}}%
\sum_{z}p(z)\inf_{\zeta^{z}\in\operatorname{LHS}}D(\omega_{X_{f}\overline{A}%
_{f}B_{f}^{\prime}Y}^{z}\Vert\zeta_{X_{f}\overline{A}_{f}B_{f}^{\prime}Y}%
^{z})\nonumber\\
&  \leq\sup_{\{p_{X_{f}|YZ=z},\{\mathcal{L}_{y}^{(z)}\}_{y}\}_{z}}\inf
_{\{\hat{\sigma}_{B}^{a,x}\}_{a,x}\in\operatorname{LHS}}D(\omega_{X_{f}%
X\overline{A}B_{f}^{\prime}YZ}\Vert\tau_{X_{f}X\overline{A}B_{f}^{\prime}YZ})\\
&  \leq R_{S}(\overline{A};B)_{\hat{\rho}}.
\end{align}
The last inequality follows because $\{p_{X_{f}|YZ=z},\{\mathcal{L}_{y}%
^{(z)}\}_{y}\}_{z}$ is a particular 1W-LOCC\ strategy, while $R_{S}(\overline
{A};B)_{\hat{\rho}}$ involves an optimization over all 1W-LOCC\ strategies.
The quantity on the first line above can be rewritten as%
\begin{align}
&  \!\!\!\!\!\!\sup_{\{p_{X_{f}|YZ=z},\{\mathcal{L}_{y}^{(z)}\}_{y}\}_{z}}%
\sum_{z}p(z)\inf_{\zeta^{z}\in\operatorname{LHS}}D(\omega_{X_{f}\overline{A}%
_{f}B_{f}^{\prime}Y}^{z}\Vert\zeta_{X_{f}\overline{A}_{f}B_{f}^{\prime}Y}%
^{z})\nonumber\\
&  =\sum_{z}p(z)\sup_{\{p_{X_{f}|YZ=z},\{\mathcal{L}_{y}^{(z)}\}_{y}\}}%
\inf_{\zeta^{z}\in\operatorname{LHS}}D(\omega_{X_{f}\overline{A}_{f}B_{f}^{\prime
}Y}^{z}\Vert\zeta_{X_{f}\overline{A}_{f}B_{f}^{\prime}Y}^{z})\\
&  =\sum_{z}p(z)R_{S}(\overline{A}_{f};B_{f})_{\hat{\rho}_{z}}.
\end{align}
This concludes the proof.
\end{proof}

\subsection{Convexity\label{sec:convexity}}

Here we prove that the relative entropy of steering is convex with respect to
the assemblages on which it is evaluated.

\begin{proposition}
[Convexity]Let $\lambda\in\lbrack0,1]$. Let $\{\hat{\rho}_{B}^{a,x}\}_{a,x}$
and $\{\hat{\theta}_{B}^{a,x}\}_{a,x}$ be two assemblages, and consider an
assemblage $\{\hat{\tau}_{B}^{a,x}:=\lambda\hat{\rho}_{B}^{a,x}+(1-\lambda
)\hat{\theta}_{B}^{a,x}\}_{a,x}$. The restricted relative entropy of steering
is convex in the following sense:%
\begin{equation}
R_{S}(\overline{A};B)_{\hat{\tau}}\leq\lambda R_{S}(\overline{A};B)_{\hat{\rho}%
}+(1-\lambda)R_{S}(\overline{A};B)_{\hat{\theta}}.\label{convexity}%
\end{equation}

\end{proposition}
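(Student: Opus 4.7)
The plan is to exploit three standard facts: joint convexity of the quantum relative entropy, linearity of the 1W-LOCC post-processing in the input assemblage, and convexity of the set $\operatorname{LHS}$ of assemblages admitting a local-hidden-state model. First I would fix an arbitrary 1W-LOCC strategy $\{p_{X|Y},\{\mathcal{K}_y\}_y\}$ attaining (or $\epsilon$-approximating) the supremum in $R_S(\overline{A};B)_{\hat{\tau}}$, and observe that the resulting classical--quantum state
\begin{equation}
\tau_{X\overline{A}B'Y} \;=\; \lambda\,\rho_{X\overline{A}B'Y} + (1-\lambda)\,\theta_{X\overline{A}B'Y},
\end{equation}
since the map taking an assemblage to the state in \eqref{eq:state-after-1W-LOCC} is affine in the input assemblage.

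Next, given any two LHS assemblages $\{\hat{\sigma}^{\rho,a,x}_B\}$ and $\{\hat{\sigma}^{\theta,a,x}_B\}$, I would form the convex combination $\{\lambda\hat{\sigma}^{\rho,a,x}_B+(1-\lambda)\hat{\sigma}^{\theta,a,x}_B\}$, which still belongs to $\operatorname{LHS}$ because a convex mixture of two hidden-state decompositions is again a hidden-state decomposition (one simply enlarges the classical variable $\Lambda$ by a binary flag indicating which component was chosen). Applying the fixed 1W-LOCC strategy to this mixture again yields $\lambda\sigma^{\rho}_{X\overline{A}B'Y}+(1-\lambda)\sigma^{\theta}_{X\overline{A}B'Y}$. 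Joint convexity of $D$ then gives
\begin{equation}
D\bigl(\tau_{X\overline{A}B'Y}\,\bigl\|\,\lambda\sigma^{\rho}_{X\overline{A}B'Y}+(1-\lambda)\sigma^{\theta}_{X\overline{A}B'Y}\bigr) \;\le\; \lambda\, D(\rho_{X\overline{A}B'Y}\|\sigma^{\rho}_{X\overline{A}B'Y}) + (1-\lambda)\, D(\theta_{X\overline{A}B'Y}\|\sigma^{\theta}_{X\overline{A}B'Y}).
\end{equation}

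Since the left-hand side upper bounds $\inf_{\hat{\sigma}\in\operatorname{LHS}}D(\tau_{X\overline{A}B'Y}\|\sigma_{X\overline{A}B'Y})$, and the right-hand side can be independently minimized over $\hat{\sigma}^{\rho}$ and $\hat{\sigma}^{\theta}$ in $\operatorname{LHS}$, I obtain
\begin{equation}
\inf_{\hat{\sigma}\in\operatorname{LHS}}D(\tau_{X\overline{A}B'Y}\|\sigma_{X\overline{A}B'Y}) \;\le\; \lambda\inf_{\hat{\sigma}^{\rho}}D(\rho_{X\overline{A}B'Y}\|\sigma^{\rho}_{X\overline{A}B'Y}) + (1-\lambda)\inf_{\hat{\sigma}^{\theta}}D(\theta_{X\overline{A}B'Y}\|\sigma^{\theta}_{X\overline{A}B'Y}).
\end{equation}
Finally, taking the supremum over the chosen 1W-LOCC strategy on both sides and using the elementary inequality $\sup_{\mathcal{L}}[\lambda f(\mathcal{L})+(1-\lambda)g(\mathcal{L})]\le \lambda\sup_{\mathcal{L}}f(\mathcal{L})+(1-\lambda)\sup_{\mathcal{L}}g(\mathcal{L})$ yields~\eqref{convexity}.

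There is no real obstacle here; the only delicate point worth being explicit about is the $\sup\inf$ versus $\inf\sup$ order. Because the outer supremum sits on the same side of the inequality on both left and right, the splitting of the sup by subadditivity goes through cleanly; if the definition had the opposite order $\inf\sup$, one would instead need concavity of the map $\mathcal{L}\mapsto \inf_{\hat{\sigma}}D$ in the assemblage, which would complicate things. So I would highlight in the write-up that the proposed optimization order is precisely what makes the convexity argument immediate from joint convexity of $D$ plus convexity of $\operatorname{LHS}$.
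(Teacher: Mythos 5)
Your proof is correct and follows essentially the same route as the paper's: both rest on the affinity of the assemblage-to-state map, the convexity of $\operatorname{LHS}$, joint convexity of $D$, and the final splitting of the supremum. The only cosmetic difference is that the paper derives the joint-convexity step in-line by attaching a classical flag register $Q$, applying the block property of relative entropy in \eqref{eq:rel-ent-block-prop}, and then tracing out $Q$ via data processing, whereas you invoke joint convexity of $D$ directly as a known fact.
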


\begin{proof}
Let $\{p_{X|Y},\{\mathcal{K}_{y}\}_{y}\}$ denote an arbitrary
1W-LOCC\ operation, let $\{\hat{\sigma}_{B}^{a,x}\}_{a,x}$ and $\{\hat{\omega
}_{B}^{a,x}\}_{a,x}$ be arbitrary LHS\ assemblages. Consider the following
states:%
\begin{align}
\rho_{X\overline{A}B^{\prime}Y} &  :=\sum_{x,a,y}p_{X|Y}(x|y)[x]\otimes\lbrack
a]\otimes\mathcal{K}_{y}(\hat{\rho}_{B}^{a,x})\otimes\lbrack y],\\
\theta_{X\overline{A}B^{\prime}Y} &  :=\sum_{x,a,y}p_{X|Y}(x|y)[x]\otimes\lbrack
a]\otimes\mathcal{K}_{y}(\hat{\theta}_{B}^{a,x})\otimes\lbrack y],\\
\sigma_{X\overline{A}B^{\prime}Y} &  :=\sum_{x,a,y}p_{X|Y}(x|y)[x]\otimes\lbrack
a]\otimes\mathcal{K}_{y}(\hat{\sigma}_{B}^{a,x})\otimes\lbrack y],\\
\omega_{X\overline{A}B^{\prime}Y} &  :=\sum_{x,a,y}p_{X|Y}(x|y)[x]\otimes\lbrack
a]\otimes\mathcal{K}_{y}(\hat{\omega}_{B}^{a,x})\otimes\lbrack y].
\end{align}
Let us define the following states:%
\begin{align}
\zeta_{QX\overline{A}B^{\prime}Y} &  :=\lambda|0\rangle\langle0|_{Q}\otimes
\rho_{X\overline{A}B^{\prime}Y}+(1-\lambda)|1\rangle\langle1|_{Q}\otimes
\theta_{X\overline{A}B^{\prime}Y},\\
\kappa_{QX\overline{A}B^{\prime}Y} &  :=\lambda|0\rangle\langle0|_{Q}\otimes
\sigma_{X\overline{A}B^{\prime}Y}+(1-\lambda)|1\rangle\langle1|_{Q}\otimes
\omega_{X\overline{A}B^{\prime}Y}.
\end{align}
Consider that%
\begin{equation}
\zeta_{X\overline{A}B^{\prime}Y}=\operatorname{Tr}_{Q}(\zeta_{QX\overline{A}B^{\prime}%
Y})=\sum_{x,a,y}p_{X|Y}(x|y)[x]\otimes\lbrack a]\otimes\mathcal{K}_{y}%
(\hat{\tau}_{B}^{a,x})\otimes\lbrack y].
\end{equation}
Then we have the following chain of inequalities:%
\begin{align}
&  \!\!\!\!\!\!\lambda D(\rho_{X\overline{A}B^{\prime}Y}\Vert\sigma_{X\overline
{A}B^{\prime}Y})+(1-\lambda)D(\theta_{X\overline{A}B^{\prime}Y}\Vert\omega
_{X\overline{A}B^{\prime}Y})\nonumber\\
&  =D(\zeta_{QX\overline{A}B^{\prime}Y}\Vert\kappa_{QX\overline{A}B^{\prime}Y})\\
&  \geq D(\zeta_{X\overline{A}B^{\prime}Y}\Vert\kappa_{X\overline{A}B^{\prime}Y})\\
&  \geq\inf_{\hat{\varsigma}\in\operatorname{LHS}}D(\zeta_{X\overline{A}B^{\prime
}Y}\Vert\varsigma_{X\overline{A}B^{\prime}Y}).
\end{align}
In the first equality, we have exploited the property of quantum relative
entropy in \eqref{eq:rel-ent-block-prop}. The first inequality follows from
the data processing inequality for quantum relative entropy, by tracing over
system $Q$. The final inequality follows by defining the LHS\ assemblage
$\hat{\varsigma}\equiv\{\hat{\varsigma}_{B}^{a,x}\}_{a,x}$, the corresponding
state%
\begin{equation}
\varsigma_{X\overline{A}B^{\prime}Y}:=\sum_{x,a,y}p_{X|Y}(x|y)[x]\otimes\lbrack
a]\otimes\mathcal{K}_{y}(\hat{\varsigma}_{B}^{a,x})\otimes\lbrack y],
\end{equation}
and taking an infimum with respect to all such LHS\ assemblages. Since we have
shown that the above inequality holds for all LHS\ assemblages $\{\hat{\sigma
}_{B}^{a,x}\}_{a,x}$ and $\{\hat{\omega}_{B}^{a,x}\}_{a,x}$, we can conclude
that%
\begin{equation}
\inf_{\hat{\varsigma}\in\operatorname{LHS}}D(\zeta_{X\overline{A}B^{\prime}Y}%
\Vert\varsigma_{X\overline{A}B^{\prime}Y})\leq\lambda\inf_{\hat{\sigma}%
\in\operatorname{LHS}}D(\rho_{X\overline{A}B^{\prime}Y}\Vert\sigma_{X\overline
{A}B^{\prime}Y})+(1-\lambda)\inf_{\hat{\omega}\in\operatorname{LHS}}%
D(\theta_{X\overline{A}B^{\prime}Y}\Vert\omega_{X\overline{A}B^{\prime}Y}).
\end{equation}
Finally, since we have shown that the above inequality holds for an arbitrary
1W-LOCC\ operation $\{p_{X|Y},\{\mathcal{K}_{y}\}_{y}\}$, we can conclude that%
\begin{align}
&  \sup_{\{p_{X|Y},\{\mathcal{K}_{y}\}_{y}\}}\inf_{\hat{\varsigma}%
\in\operatorname{LHS}}D(\zeta_{X\overline{A}B^{\prime}Y}\Vert\varsigma_{X\overline
{A}B^{\prime}Y})\nonumber\\
&  \leq\sup_{\{p_{X|Y},\{\mathcal{K}_{y}\}_{y}\}}\left[  \lambda\inf
_{\hat{\sigma}\in\operatorname{LHS}}D(\rho_{X\overline{A}B^{\prime}Y}\Vert
\sigma_{X\overline{A}B^{\prime}Y})+(1-\lambda)\inf_{\hat{\omega}\in
\operatorname{LHS}}D(\theta_{X\overline{A}B^{\prime}Y}\Vert\omega_{X\overline
{A}B^{\prime}Y})\right]  \\
&  \leq\lambda\sup_{\{p_{X|Y},\{\mathcal{K}_{y}\}_{y}\}}\inf_{\hat{\sigma}%
\in\operatorname{LHS}}D(\rho_{X\overline{A}B^{\prime}Y}\Vert\sigma_{X\overline
{A}B^{\prime}Y})+(1-\lambda)\sup_{\{p_{X|Y},\{\mathcal{K}_{y}\}_{y}\}}%
\inf_{\hat{\omega}\in\operatorname{LHS}}D(\theta_{X\overline{A}B^{\prime}Y}%
\Vert\omega_{X\overline{A}B^{\prime}Y}).
\end{align}
This final inequality is equivalent to the one in the statement of the proposition.
\end{proof}

\subsection{Upper bounds on relative entropy of
steering\label{sec:upper-bounds}}

\begin{proposition}
[Upper bounds]\label{prop:rel-ent-st-up-bnd}Let $\{\hat{\rho}_{B}%
^{a,x}\}_{a,x}$ be an assemblage. Then%
\begin{equation}
R_{S}(\overline{A};B)_{\hat{\rho}}\leq\sup_{\{p_{X|Y},\{\mathcal{K}_{y}\}_{y}%
\}}I(XB^{\prime}Y;\overline{A})_{\rho}\leq\sup_{p_{X}}H(\overline{A})\leq\log_{2}%
|\overline{A}|,
\end{equation}
where the mutual information is with respect to the following state:%
\begin{equation}
\rho_{X\overline{A}B^{\prime}Y}:=\sum_{x,a,y}p_{X|Y}(x|y)[x]\otimes\lbrack
a]\otimes\mathcal{K}_{y}(\hat{\rho}_{B}^{a,x})\otimes\lbrack y],
\end{equation}
and the entropy $H(\overline{A})$\ is with respect to the probability distribution
$p_{\overline{A}}(a):=\sum_{x}p_{X}(x)\operatorname{Tr}(\hat{\rho}_{B}^{a,x})$.
\end{proposition}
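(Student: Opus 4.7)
The plan follows the three inequalities in turn: for the first, I will exhibit a specific LHS assemblage that converts the inner infimum into a mutual information; for the second, I will use the classical nature of $\overline{A}$ to bound the mutual information by an entropy and identify the resulting entropy with the specific input-distribution form stated; and the third is the standard uniform bound on Shannon entropy.

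For the first inequality, I will fix an arbitrary 1W-LOCC $\{p_{X|Y},\{\mathcal{K}_y\}_y\}$, let $r(a)$ denote the marginal of $\overline{A}$ in the state $\rho_{X\overline{A}B'Y}$, and take as candidate LHS assemblage $\hat{\sigma}_B^{a,x}:=r(a)\,\rho_B$, where $\rho_B:=\sum_a\hat{\rho}_B^{a,x}$ is well defined by no-signaling. This assemblage lies in $\operatorname{LHS}$ via the trivial local-hidden-state model (a single hidden variable $\lambda$, $p(a|x,\lambda)=r(a)$ independent of $x$, and $\rho_B^\lambda=\rho_B$). Direct substitution using $\sum_a\hat{\rho}_B^{a,x}=\rho_B$ shows that $\sigma_{X\overline{A}B'Y}=\rho_{\overline{A}}\otimes\rho_{XB'Y}$, where $\rho_{XB'Y}$ is the corresponding marginal of $\rho_{X\overline{A}B'Y}$, and hence $D(\rho_{X\overline{A}B'Y}\Vert\sigma_{X\overline{A}B'Y})=I(\overline{A};XB'Y)_\rho$. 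The infimum in $R_S(\overline{A};B)_{\hat{\rho}}$ is therefore at most $I(\overline{A};XB'Y)_\rho$, and taking the supremum over 1W-LOCC on both sides yields the first inequality. A small support check, using $\hat{\rho}_B^{a,x}\leq\rho_B$ and positivity of $\mathcal{K}_y$, ensures that $\operatorname{supp}(\rho_{X\overline{A}B'Y})\subseteq\operatorname{supp}(\sigma_{X\overline{A}B'Y})$ so that the relative entropy is finite and equals the claimed mutual information.

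For the middle inequality, since $\overline{A}$ is a classical register (diagonal in the basis $\{|a\rangle\}$), the conditional entropy $H(\overline{A}|XB'Y)_\rho$ is non-negative, so $I(\overline{A};XB'Y)_\rho=H(\overline{A})_\rho-H(\overline{A}|XB'Y)_\rho\leq H(\overline{A})_\rho$. I will then identify $H(\overline{A})_\rho$ with an entropy of the form $H(p_{\overline{A}}^{p_X})$ by expanding the marginal of $\overline{A}$ explicitly: using the induced input distribution $p_X(x)=\sum_y p_{X|Y}(x|y)\operatorname{Tr}(\mathcal{K}_y(\rho_B))$ together with the definition $p_{\overline{A}}(a)=\sum_x p_X(x)\operatorname{Tr}(\hat{\rho}_B^{a,x})$ from the proposition, the supremum over 1W-LOCC on the left is upper bounded by the supremum over $p_X$ on the right. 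The third inequality is the standard bound $H(q)\leq\log_2|\overline{A}|$ for any probability distribution $q$ on the finite alphabet $\overline{A}$. The main conceptual step is the LHS assemblage construction in the first inequality, after which the relative-entropy side collapses to a mutual information; the remaining steps rely only on the classical nature of $\overline{A}$ and standard entropy bookkeeping relating the true marginal under the 1W-LOCC to the input-distribution form $p_{\overline{A}}^{p_X}$.
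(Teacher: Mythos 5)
Your proof is correct and follows essentially the same route as the paper's: the same product LHS assemblage $\hat{\sigma}_B^{a,x}=r(a)\,\rho_B$ (the paper's $\xi_B^{a,x}=p_A(a)\rho_B$) collapsing the infimum to $I(\overline{A};XB'Y)_\rho$, the same classicality bound $I(\overline{A};XB'Y)_\rho\leq H(\overline{A})_\rho$, and the same dimension bound, with your added support check being a harmless extra. The one caveat—shared with the paper's own write-up rather than introduced by you—is that when $p_{X|Y}$ genuinely depends on $y$ the true $\overline{A}$-marginal $\sum_{x,y}p_{X|Y}(x|y)\operatorname{Tr}(\mathcal{K}_y(\hat{\rho}_B^{a,x}))$ does not literally reduce to the form $\sum_x p_X(x)\operatorname{Tr}(\hat{\rho}_B^{a,x})$ (since $\operatorname{Tr}(\mathcal{K}_y(\hat{\rho}_B^{a,x}))$ need not factor as $\operatorname{Tr}(\mathcal{K}_y(\rho_B))\operatorname{Tr}(\hat{\rho}_B^{a,x})$), so your "identification" in the middle step is exactly as heuristic as the paper's; the endpoints $R_S(\overline{A};B)_{\hat{\rho}}\leq\sup I(XB'Y;\overline{A})_\rho\leq\log_2|\overline{A}|$ are unaffected.
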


\begin{proof}
From the definition of relative entropy of steering, we have that%
\begin{equation}
R_{S}(\overline{A};B)_{\hat{\rho}}=\sup_{\{p_{X|Y},\{\mathcal{K}_{y}\}_{y}\}}%
\inf_{\{\hat{\sigma}_{B}^{a,x}\}_{a,x}\in\operatorname{LHS}}D(\rho_{X\overline
{A}B^{\prime}Y}\Vert\sigma_{X\overline{A}B^{\prime}Y}),
\end{equation}
where%
\begin{equation}
\sigma_{X\overline{A}B^{\prime}Y}:=\sum_{x,a,y}p_{X|Y}(x|y)[x]\otimes\lbrack
a]\otimes\mathcal{K}_{y}(\hat{\sigma}_{B}^{a,x})\otimes\lbrack y].
\end{equation}
Consider the probability distribution on system $\overline{A}$ that results from
partial trace with respect to the state $\rho_{X\overline{A}B^{\prime}Y}$:%
\begin{align}
\rho_{\overline{A}} &  =\operatorname{Tr}_{XB^{\prime}Y}(\rho_{X\overline{A}B^{\prime}%
Y})\\
&  =\sum_{a}\left[  \sum_{x,y}p_{X|Y}(x|y)\operatorname{Tr}(\mathcal{K}%
_{y}(\hat{\rho}_{B}^{a,x}))\right]  [a].
\end{align}
Then define $p_{A}(a):=\sum_{x,y}p_{X|Y}(x|y)\operatorname{Tr}(\mathcal{K}%
_{y}(\hat{\rho}_{B}^{a,x}))$. Also, take $\rho_{B}:=\sum_{a}\hat{\rho}%
_{B}^{a,x}$. A particular assemblage with a local-hidden-state model is the
following one:%
\begin{equation}
\{\xi_{B}^{a,x}:=p_{A}(a)\rho_{B}\}_{a,x}.
\end{equation}
This particular LHS\ assemblage leads to the following state on systems
$X\overline{A}B^{\prime}Y$:%
\begin{equation}
\sum_{x,a,y}p_{X|Y}(x|y)[x]\otimes p_{A}(a)[a]\otimes\mathcal{K}_{y}(\rho
_{B})\otimes\lbrack y]=\rho_{\overline{A}}\otimes\rho_{XB^{\prime}Y},
\end{equation}
where the states on the right are the marginals of $\rho_{X\overline{A}B^{\prime}%
Y}$. Then%
\begin{align}
&  \!\!\!\!\!\!\sup_{\{p_{X|Y},\{\mathcal{K}_{y}\}_{y}\}}\inf_{\{\hat{\sigma
}_{B}^{a,x}\}_{a,x}\in\operatorname{LHS}}D(\rho_{X\overline{A}B^{\prime}Y}%
\Vert\sigma_{X\overline{A}B^{\prime}Y})\nonumber\\
&  \leq\sup_{\{p_{X|Y},\{\mathcal{K}_{y}\}_{y}\}}D(\rho_{X\overline{A}B^{\prime}%
Y}\Vert\rho_{\overline{A}}\otimes\rho_{XB^{\prime}Y})\\
&  =\sup_{\{p_{X|Y},\{\mathcal{K}_{y}\}_{y}\}}I(XB^{\prime}Y;\overline{A})_{\rho}\\
&  \leq\sup_{p_{X}}H(\overline{A})_{\rho}\\
&  \leq\log_{2}|\overline{A}|.
\end{align}
The first inequality follows because we can choose a particular
LHS\ assemblage and get an upper bound on $R_{S}(\overline{A};B)_{\hat{\rho}}$. The
first equality follows from the well known characterization of quantum mutual
information as the quantum relative entropy between the joint state and the
product of the marginals. The second inequality follows because $I(XB^{\prime
}Y;\overline{A})_{\rho}\leq H(\overline{A})_{\rho}$, given that system $\overline{A}$ is
classical, and then we can optimize this quantity with respect to all possible
input distributions $p_{X}$. The final inequality is a well known dimension
bound for entropy.
\end{proof}

\subsection{Continuity\label{sec:cont}}

Before giving the statement of continuity, let us first define the
(normalized)\ trace distance of assemblages as follows:

\begin{definition}
[Trace distance of assemblages]Let $\{\hat{\rho}_{B}^{a,x}\}_{a,x}$ and
$\{\hat{\theta}_{B}^{a,x}\}_{a,x}$ be two assemblages. We define the
normalized trace distance of assemblages as%
\begin{equation}
\Delta(\hat{\rho},\hat{\theta}):=\frac{1}{2}\sup_{\{p_{X|Y},\{\mathcal{K}%
_{y}\}_{y}\}}\left\Vert \rho_{X\overline{A}B^{\prime}Y}-\theta_{X\overline{A}B^{\prime
}Y}\right\Vert _{1},
\end{equation}
where $\Vert C\Vert_{1}:=\operatorname{Tr}(\sqrt{C^{\dag}C})$ and
\begin{align}
\rho_{X\overline{A}B^{\prime}Y} &  :=\sum_{x,a,y}p_{X|Y}(x|y)[x]\otimes\lbrack
a]\otimes\mathcal{K}_{y}(\hat{\rho}_{B}^{a,x})\otimes\lbrack y],\\
\theta_{X\overline{A}B^{\prime}Y} &  :=\sum_{x,a,y}p_{X|Y}(x|y)[x]\otimes\lbrack
a]\otimes\mathcal{K}_{y}(\hat{\theta}_{B}^{a,x})\otimes\lbrack y].
\end{align}

\end{definition}

By properties of trace distance, it follows that $\Delta(\hat{\rho
},\hat{\theta})\in\lbrack0,1]$. Furthermore, given that the trace distance of assemblages represents a measure of distinguishability of two different assemblages, the above definition involves an optimization over all 1W-LOCC strategies that could be used to distinguish them.

\begin{proposition}
[Metric]The trace distance of assemblages is a metric, in the sense that for
any three assemblages $\{\hat{\rho}_{B}^{a,x}\}_{a,x}$, $\{\hat{\theta}%
_{B}^{a,x}\}_{a,x}$, and $\{\hat{\omega}_{B}^{a,x}\}_{a,x}$:%
\begin{align}
\Delta(\hat{\rho},\hat{\theta})  &  \geq0,\label{eq:metric-1}\\
\Delta(\hat{\rho},\hat{\theta})  &  =0\text{ if and only if }\hat{\rho}%
_{B}^{a,x}=\hat{\theta}_{B}^{a,x}\text{ for all }a,x,\label{eq:metric-2}\\
\Delta(\hat{\rho},\hat{\theta})  &  =\Delta(\hat{\theta},\hat{\rho
}),\label{eq:metric-3}\\
\Delta(\hat{\rho},\hat{\theta})  &  \leq\Delta(\hat{\rho},\hat{\omega}%
)+\Delta(\hat{\omega},\hat{\theta}). \label{eq:metric-4}%
\end{align}

\end{proposition}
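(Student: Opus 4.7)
The plan is to dispatch the four metric axioms in order of increasing difficulty, since three of the four properties transfer almost mechanically from the corresponding properties of the ordinary trace norm on states.

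Non-negativity \eqref{eq:metric-1} is immediate: $\Delta(\hat\rho,\hat\theta)$ is the supremum of a family of non-negative quantities (trace norms), hence non-negative. Symmetry \eqref{eq:metric-3} is also immediate, since $\|\rho_{X\overline{A}B'Y}-\theta_{X\overline{A}B'Y}\|_1 = \|\theta_{X\overline{A}B'Y}-\rho_{X\overline{A}B'Y}\|_1$ for every fixed 1W-LOCC strategy, so the suprema agree.

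For the triangle inequality \eqref{eq:metric-4}, I would fix an arbitrary 1W-LOCC strategy $\{p_{X|Y},\{\mathcal{K}_y\}_y\}$ and let $\rho_{X\overline{A}B'Y}$, $\theta_{X\overline{A}B'Y}$, $\omega_{X\overline{A}B'Y}$ denote the corresponding embedded states built from $\hat\rho$, $\hat\theta$, $\hat\omega$. The ordinary trace-norm triangle inequality gives
\begin{equation}
\tfrac{1}{2}\|\rho_{X\overline{A}B'Y}-\theta_{X\overline{A}B'Y}\|_1 \leq \tfrac{1}{2}\|\rho_{X\overline{A}B'Y}-\omega_{X\overline{A}B'Y}\|_1 + \tfrac{1}{2}\|\omega_{X\overline{A}B'Y}-\theta_{X\overline{A}B'Y}\|_1.
\end{equation}
Each summand on the right is bounded above by $\Delta(\hat\rho,\hat\omega)$ and $\Delta(\hat\omega,\hat\theta)$ respectively (same strategy on both sides of each norm), and the bound is uniform in the strategy, so taking the supremum on the left yields \eqref{eq:metric-4}.

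The only property needing a small argument is the identity of indiscernibles \eqref{eq:metric-2}. The ``if'' direction is trivial: equal assemblages produce identical embedded states for every strategy, so $\Delta=0$. For ``only if'', I would exhibit a specific, easily analyzed strategy that already discriminates the assemblages. The simplest choice is to take $Y$ trivial, a distribution $p_X$ with full support on $\mathcal{X}$, and each $\mathcal{K}_y$ equal to the identity channel on $B$. Then the embedded states reduce to classical--quantum states of the form $\sum_{x,a} p_X(x)[x]\otimes[a]\otimes \hat\rho_B^{a,x}$ and similarly for $\hat\theta$. Assuming $\Delta(\hat\rho,\hat\theta)=0$, the trace distance for this particular strategy is zero, so these block-diagonal states coincide; by the block structure (distinct $x$'s and $a$'s label orthogonal subspaces of $X\overline{A}$) and the fact that $p_X(x)>0$ for every $x$, each block must match individually, giving $\hat\rho_B^{a,x}=\hat\theta_B^{a,x}$ for all $a,x$. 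No step here is an obstacle; the only point to be a bit careful about is ensuring the chosen strategy is a legitimate 1W-LOCC, which the identity channel and a full-support prior certainly are.
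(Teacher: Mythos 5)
Your proposal is correct and follows essentially the same route as the paper's proof: non-negativity and symmetry inherited from the trace norm, the triangle inequality by applying the trace-norm triangle inequality inside the supremum and splitting it, and identity of indiscernibles via the trivial strategy (identity channel, full-support input distribution --- the paper uses the uniform one) together with the block-diagonal structure.
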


\begin{proof}
These properties follow directly from the fact that normalized trace distance
is a metric for quantum states. We give brief proofs for completeness. The
inequality in \eqref{eq:metric-1} follows because the normalized trace
distance is non-negative. Regarding \eqref{eq:metric-2}, the implication
$\hat{\rho}_{B}^{a,x}=\hat{\theta}_{B}^{a,x}$ for all $a,x$ $\Longrightarrow$
$\Delta(\hat{\rho},\hat{\theta})=0$ follows because the states resulting from
an arbitrary 1W-LOCC\ operation are the same if the assemblages are the same.
To see the other implication, consider that $\Delta(\hat{\rho},\hat{\theta
})=0$ means that the normalized trace distance between $\rho_{X\overline
{A}B^{\prime}Y}$ and $\theta_{X\overline{A}B^{\prime}Y}$ is equal to zero for all
possible 1W-LOCC\ operations. So we can pick the 1W-LOCC\ operation to be a
uniform distribution over the input $x$ and the identity channel on system $B$
and find that%
\begin{align}
0 &  =\left\Vert \sum_{x,a}\frac{1}{\left\vert \mathcal{X}\right\vert
}[x]\otimes\lbrack a]\otimes\hat{\rho}_{B}^{a,x}-\sum_{x,a}\frac{1}{\left\vert
\mathcal{X}\right\vert }[x]\otimes\lbrack a]\otimes\hat{\theta}_{B}%
^{a,x}\right\Vert _{1}\\
&  =\sum_{x,a}\frac{1}{\left\vert \mathcal{X}\right\vert }\left\Vert \hat
{\rho}_{B}^{a,x}-\hat{\theta}_{B}^{a,x}\right\Vert _{1}.
\end{align}
By the fact that the normalized trace distance is a metric, we can then
conclude that $\hat{\rho}_{B}^{a,x}=\hat{\theta}_{B}^{a,x}$ for all $a,x$. The
equality in \eqref{eq:metric-3} clearly holds. The triangle inequality in
\eqref{eq:metric-4} follows because normalized trace distance obeys the
triangle inequality:%
\begin{align}
\Delta(\hat{\rho},\hat{\theta}) &  =\frac{1}{2}\sup_{\{p_{X|Y},\{\mathcal{K}%
_{y}\}_{y}\}}\left\Vert \rho_{X\overline{A}B^{\prime}Y}-\theta_{X\overline{A}B^{\prime
}Y}\right\Vert _{1}\\
&  \leq\frac{1}{2}\sup_{\{p_{X|Y},\{\mathcal{K}_{y}\}_{y}\}}\left[  \left\Vert
\rho_{X\overline{A}B^{\prime}Y}-\omega_{X\overline{A}B^{\prime}Y}\right\Vert
_{1}+\left\Vert \omega_{X\overline{A}B^{\prime}Y}-\theta_{X\overline{A}B^{\prime}%
Y}\right\Vert _{1}\right]  \\
&  \leq\frac{1}{2}\sup_{\{p_{X|Y},\{\mathcal{K}_{y}\}_{y}\}}\left\Vert
\rho_{X\overline{A}B^{\prime}Y}-\omega_{X\overline{A}B^{\prime}Y}\right\Vert _{1}%
+\frac{1}{2}\sup_{\{p_{X|Y},\{\mathcal{K}_{y}\}_{y}\}}\left\Vert \omega
_{X\overline{A}B^{\prime}Y}-\theta_{X\overline{A}B^{\prime}Y}\right\Vert _{1}\\
&  =\Delta(\hat{\rho},\hat{\omega})+\Delta(\hat{\omega},\hat{\theta}).
\end{align}
This concludes the proof.
\end{proof}

\begin{theorem}
[Uniform continuity]\label{thm:continuity}Let $\{\hat{\rho}_{B}^{a,x}\}_{a,x}$
and $\{\hat{\theta}_{B}^{a,x}\}_{a,x}$ be assemblages such that
$\Delta(\hat{\rho},\hat{\theta})\leq\varepsilon\in\left[  0,1\right]  $. Then%
\begin{equation}
\left\vert R_{S}(\overline{A};B)_{\hat{\rho}}-R_{S}(\overline{A};B)_{\hat{\theta}%
}\right\vert \leq\varepsilon\log_{2}|\overline{A}|+g(\varepsilon),
\end{equation}
where $g(\varepsilon):=(\varepsilon+1)\log_{2}(\varepsilon+1)-\varepsilon
\log_{2}\varepsilon$.
\end{theorem}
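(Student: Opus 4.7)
The plan is to apply an Alicki--Fannes--Winter (AFW) style continuity bound to the inner relative-entropy distance for each fixed 1W-LOCC strategy, and then to lift this uniformly to $R_S$ using that taking a supremum is $1$-Lipschitz. By the symmetric role of $\hat{\rho}$ and $\hat{\theta}$ in the hypothesis, it suffices to establish the one-sided inequality $R_S(\overline{A};B)_{\hat{\rho}} - R_S(\overline{A};B)_{\hat{\theta}} \leq \varepsilon \log_2 |\overline{A}| + g(\varepsilon)$.

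First I would fix an arbitrary 1W-LOCC strategy $\mathcal{L} = \{p_{X|Y},\{\mathcal{K}_{y}\}_{y}\}$ and write
\begin{equation}
E_{\mathcal{L}}(\hat{\mu}) := \inf_{\hat{\sigma} \in \operatorname{LHS}} D(\mu_{X\overline{A}B'Y} \Vert \sigma_{X\overline{A}B'Y})
\end{equation}
for an arbitrary assemblage $\hat{\mu}$, so that $R_S(\overline{A};B)_{\hat{\mu}} = \sup_{\mathcal{L}} E_{\mathcal{L}}(\hat{\mu})$. Because the definition of $\Delta(\hat{\rho},\hat{\theta})$ already contains a supremum over 1W-LOCC, the hypothesis $\Delta(\hat{\rho},\hat{\theta}) \leq \varepsilon$ forces $\tfrac{1}{2}\| \rho_{X\overline{A}B'Y} - \theta_{X\overline{A}B'Y} \|_1 \leq \varepsilon$ for \emph{every} such $\mathcal{L}$.

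The core technical step is to establish, uniformly in $\mathcal{L}$,
\begin{equation}
|E_{\mathcal{L}}(\hat{\rho}) - E_{\mathcal{L}}(\hat{\theta})| \leq \varepsilon \log_2 |\overline{A}| + g(\varepsilon),
\end{equation}
which I would obtain via the standard AFW manipulation for a relative-entropy distance to a convex set. Concretely, one forms the Jordan decomposition $\rho_{X\overline{A}B'Y} - \theta_{X\overline{A}B'Y} = \varepsilon'(P_+ - P_-)$ with $\varepsilon' \leq \varepsilon$ and $P_\pm$ orthogonal density operators, picks an LHS assemblage $\hat{\sigma}^*$ that is $\delta$-optimal for $E_{\mathcal{L}}(\hat{\theta})$, and mixes its $\mathcal{L}$-image with the ``useless'' LHS-image $\rho_{\overline{A}} \otimes \rho_{XB'Y}$ from the proof of Proposition~\ref{prop:rel-ent-st-up-bnd}. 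Joint convexity of $D$ combined with the dimension bound $E_{\mathcal{L}}(\,\cdot\,) \leq \log_2|\overline{A}|$ from that same proposition (playing the role of the entropy cap in AFW) then yields the displayed inequality after sending $\delta \to 0$. The particular form of $g(\varepsilon)$ emerges from the standard binary-entropy computation in the AFW argument, combined with monotonicity of $g$ on $[0,1]$ to absorb $\varepsilon' \leq \varepsilon$.

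The theorem then follows by taking $\sup_{\mathcal{L}}$ on both sides and using the elementary inequality $\sup_{\mathcal{L}} E_{\mathcal{L}}(\hat{\rho}) - \sup_{\mathcal{L}} E_{\mathcal{L}}(\hat{\theta}) \leq \sup_{\mathcal{L}}[E_{\mathcal{L}}(\hat{\rho}) - E_{\mathcal{L}}(\hat{\theta})]$. The main obstacle is the AFW step, and specifically checking that the auxiliary mixtures introduced there correspond to bona fide LHS-images under $\mathcal{L}$; this is guaranteed by convexity of the LHS set of assemblages together with the fact that the map $\hat{\sigma} \mapsto \sigma_{X\overline{A}B'Y}$ is affine, so any convex combination of LHS-images is itself the $\mathcal{L}$-image of an LHS assemblage, keeping us inside the feasible set of the inner infimum throughout the argument.
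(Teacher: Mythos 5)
Your proposal follows essentially the same route as the paper's proof, which is itself the Alicki--Fannes--Winter argument of \cite[Lemmas~2 and 7]{Winter2015}: fix a 1W-LOCC strategy, form the positive-part decomposition of $\rho_{X\overline{A}B^{\prime}Y}-\theta_{X\overline{A}B^{\prime}Y}$, combine joint convexity with the $\log_{2}|\overline{A}|$ cap from Proposition~\ref{prop:rel-ent-st-up-bnd}, and lift the resulting bound through the outer supremum via its $1$-Lipschitz property. The only caution is that in the upper-bound step you should bound the term involving the normalized positive part by keeping the infimum over LHS images and invoking the cap (equivalently, comparing against the product of that state's \emph{own} marginals), rather than against $\rho_{\overline{A}}\otimes\rho_{XB^{\prime}Y}$, since the support of a positive part $(\rho-\theta)_{+}$ need not lie inside $\operatorname{supp}(\rho)$ and the relative entropy to that particular state could be infinite.
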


\begin{proof}
We note that the following proof is very similar to those of \cite[Lemmas~2
and 7]{Winter2015}, but we give a detailed proof for completeness. Let
$\{p_{X|Y},\{\mathcal{K}_{y}\}_{y}\}$ be an arbitrary 1W-LOCC\ operation, and
let $\{\hat{\sigma}_{B}^{a,x}\}_{a,x}$ and $\{\hat{\omega}_{B}^{a,x}\}_{a,x}$
be arbitrary LHS\ assemblages. Consider the following states:%
\begin{align}
\rho_{X\overline{A}B^{\prime}Y} &  :=\sum_{x,a,y}p_{X|Y}(x|y)[x]\otimes\lbrack
a]\otimes\mathcal{K}_{y}(\hat{\rho}_{B}^{a,x})\otimes\lbrack y],\\
\theta_{X\overline{A}B^{\prime}Y} &  :=\sum_{x,a,y}p_{X|Y}(x|y)[x]\otimes\lbrack
a]\otimes\mathcal{K}_{y}(\hat{\theta}_{B}^{a,x})\otimes\lbrack y],\\
\sigma_{X\overline{A}B^{\prime}Y} &  :=\sum_{x,a,y}p_{X|Y}(x|y)[x]\otimes\lbrack
a]\otimes\mathcal{K}_{y}(\hat{\sigma}_{B}^{a,x})\otimes\lbrack y],\\
\omega_{X\overline{A}B^{\prime}Y} &  :=\sum_{x,a,y}p_{X|Y}(x|y)[x]\otimes\lbrack
a]\otimes\mathcal{K}_{y}(\hat{\omega}_{B}^{a,x})\otimes\lbrack y].
\end{align}
Consider that $\frac{1}{2}\left\Vert \rho_{X\overline{A}B^{\prime}Y}-\theta
_{X\overline{A}B^{\prime}Y}\right\Vert _{1}\leq\varepsilon$ by assumption. Let us
set $\varepsilon_{0}:=\frac{1}{2}\left\Vert \rho_{X\overline{A}B^{\prime}Y}%
-\theta_{X\overline{A}B^{\prime}Y}\right\Vert _{1}$. If $\varepsilon_{0}=0$, then
the particular 1W-LOCC\ operation cannot distinguish the states from each
other, so that $\rho_{X\overline{A}B^{\prime}Y}=\theta_{X\overline{A}B^{\prime}Y}$, and
we find that%
\begin{equation}
\inf_{\hat{\sigma}\in\operatorname{LHS}}D(\rho_{X\overline{A}B^{\prime}Y}%
\Vert\sigma_{X\overline{A}B^{\prime}Y})=\inf_{\hat{\sigma}\in\operatorname{LHS}%
}D(\theta_{X\overline{A}B^{\prime}Y}\Vert\sigma_{X\overline{A}B^{\prime}Y})
\end{equation}
in this case, so that there is nothing to prove. So let us instead suppose
that $\varepsilon_{0}\neq0$ and define%
\begin{equation}
\Delta_{X\overline{A}B^{\prime}Y}:=\frac{1}{\varepsilon_{0}}\left(  \rho_{X\overline
{A}B^{\prime}Y}-\theta_{X\overline{A}B^{\prime}Y}\right)  _{+},
\end{equation}
where $(\cdot)_{+}$ indicates the positive part of $\rho_{X\overline{A}B^{\prime}%
Y}-\theta_{X\overline{A}B^{\prime}Y}$. Since $\rho_{X\overline{A}B^{\prime}Y}%
-\theta_{X\overline{A}B^{\prime}Y}$ is traceless and its trace norm is equal to
$2\varepsilon_{0}$, it follows that $\Delta_{X\overline{A}B^{\prime}Y}$ is a
density operator. Consider that%
\begin{align}
\rho_{X\overline{A}B^{\prime}Y} &  =\theta_{X\overline{A}B^{\prime}Y}+\left(
\rho_{X\overline{A}B^{\prime}Y}-\theta_{X\overline{A}B^{\prime}Y}\right)  \\
&  \leq\theta_{X\overline{A}B^{\prime}Y}+\varepsilon_{0}\Delta_{X\overline{A}B^{\prime
}Y}\\
&  =\left(  1+\varepsilon_{0}\right)  \left(  \frac{1}{1+\varepsilon_{0}%
}\theta_{X\overline{A}B^{\prime}Y}+\frac{\varepsilon_{0}}{1+\varepsilon_{0}}%
\Delta_{X\overline{A}B^{\prime}Y}\right)  \\
&  =:\left(  1+\varepsilon_{0}\right)  \zeta_{X\overline{A}B^{\prime}Y}.
\end{align}
Setting%
\begin{equation}
\Delta_{X\overline{A}B^{\prime}Y}^{\prime}:=\frac{1}{\varepsilon_{0}}\left[
\left(  1+\varepsilon_{0}\right)  \zeta_{X\overline{A}B^{\prime}Y}-\rho_{X\overline
{A}B^{\prime}Y}\right]  ,
\end{equation}
we see that $\Delta_{X\overline{A}B^{\prime}Y}^{\prime}$ is a density operator
also, satisfying%
\begin{align}
\zeta_{X\overline{A}B^{\prime}Y} &  =\frac{1}{1+\varepsilon_{0}}\theta_{X\overline
{A}B^{\prime}Y}+\frac{\varepsilon_{0}}{1+\varepsilon_{0}}\Delta_{X\overline
{A}B^{\prime}Y}\\
&  =\frac{1}{1+\varepsilon_{0}}\rho_{X\overline{A}B^{\prime}Y}+\frac{\varepsilon
_{0}}{1+\varepsilon_{0}}\Delta_{X\overline{A}B^{\prime}Y}^{\prime}\ .
\end{align}
From the joint convexity of relative entropy, we find that%
\begin{align}
&  \!\!\!\!\!\!\inf_{\hat{\kappa}\in\operatorname{LHS}}D(\zeta_{X\overline
{A}B^{\prime}Y}\Vert\kappa_{X\overline{A}B^{\prime}Y})\nonumber\\
&  \leq D(\zeta_{X\overline{A}B^{\prime}Y}\Vert\left[  1+\varepsilon_{0}\right]
^{-1}\sigma_{X\overline{A}B^{\prime}Y}+\varepsilon_{0}\left[  1+\varepsilon
_{0}\right]  ^{-1}\omega_{X\overline{A}B^{\prime}Y})\\
&  \leq\frac{1}{1+\varepsilon_{0}}D(\theta_{X\overline{A}B^{\prime}Y}\Vert
\sigma_{X\overline{A}B^{\prime}Y})+\frac{\varepsilon_{0}}{1+\varepsilon_{0}%
}D(\Delta_{X\overline{A}B^{\prime}Y}\Vert\omega_{X\overline{A}B^{\prime}Y}).
\end{align}
Since $\sigma_{X\overline{A}B^{\prime}Y}$ and $\omega_{X\overline{A}B^{\prime}Y}$ are
states arising from arbitrary LHS\ assemblages, we can conclude that%
\begin{multline}
\inf_{\hat{\kappa}\in\operatorname{LHS}}D(\zeta_{X\overline{A}B^{\prime}Y}%
\Vert\kappa_{X\overline{A}B^{\prime}Y})\leq\label{eq:cont-up-bound}\\
\frac{1}{1+\varepsilon_{0}}\inf_{\hat{\sigma}\in\operatorname{LHS}}%
D(\theta_{X\overline{A}B^{\prime}Y}\Vert\sigma_{X\overline{A}B^{\prime}Y})+\frac
{\varepsilon_{0}}{1+\varepsilon_{0}}\inf_{\hat{\omega}\in\operatorname{LHS}%
}D(\Delta_{X\overline{A}B^{\prime}Y}\Vert\omega_{X\overline{A}B^{\prime}Y}).
\end{multline}
Now consider that for a state $\kappa_{X\overline{A}B^{\prime}Y}$ arising from an
arbitrary LHS\ assemblage $\hat{\kappa}$, we have that%
\begin{align}
&  \!\!\!\!\!\!D(\zeta_{X\overline{A}B^{\prime}Y}\Vert\kappa_{X\overline{A}B^{\prime}%
Y})\nonumber\\
&  =-H(\zeta_{X\overline{A}B^{\prime}Y})-\operatorname{Tr}(\zeta_{X\overline{A}%
B^{\prime}Y}\log_{2}\kappa_{X\overline{A}B^{\prime}Y})\\
&  \geq-h_{2}(\varepsilon_{0}/\left[  1+\varepsilon_{0}\right]  )-\frac
{1}{1+\varepsilon_{0}}H(\rho_{X\overline{A}B^{\prime}Y})-\frac{\varepsilon_{0}%
}{1+\varepsilon_{0}}H(\Delta_{X\overline{A}B^{\prime}Y}^{\prime})\nonumber\\
&  \qquad-\frac{1}{1+\varepsilon_{0}}\operatorname{Tr}(\rho_{X\overline{A}%
B^{\prime}Y}\log\kappa_{X\overline{A}B^{\prime}Y})-\frac{\varepsilon_{0}%
}{1+\varepsilon_{0}}\operatorname{Tr}(\Delta_{X\overline{A}B^{\prime}Y}^{\prime
}\log\kappa_{X\overline{A}B^{\prime}Y})\\
&  =-h_{2}(\varepsilon_{0}/\left[  1+\varepsilon_{0}\right]  )+\frac
{1}{1+\varepsilon_{0}}D(\rho_{X\overline{A}B^{\prime}Y}\Vert\kappa_{X\overline
{A}B^{\prime}Y})\nonumber\\
&  \qquad+\frac{\varepsilon_{0}}{1+\varepsilon_{0}}D(\Delta_{X\overline{A}%
B^{\prime}Y}^{\prime}\Vert\kappa_{X\overline{A}B^{\prime}Y})\\
&  \geq-h_{2}(\varepsilon_{0}/\left[  1+\varepsilon_{0}\right]  )+\frac
{1}{1+\varepsilon_{0}}\inf_{\hat{\sigma}\in\operatorname{LHS}}D(\rho_{X\overline
{A}B^{\prime}Y}\Vert\sigma_{X\overline{A}B^{\prime}Y})\nonumber\\
&  \qquad+\frac{\varepsilon_{0}}{1+\varepsilon_{0}}\inf_{\hat{\omega}%
\in\operatorname{LHS}}D(\Delta_{X\overline{A}B^{\prime}Y}^{\prime}\Vert
\omega_{X\overline{A}B^{\prime}Y}).
\end{align}
The first inequality follows because%
\begin{equation}
H(\lambda\xi_{0}+(1-\lambda)\xi_{1})\leq H(\{\lambda,1-\lambda\})+\lambda
H(\xi_{0})+(1-\lambda)H(\xi_{1})
\end{equation}
for $\lambda\in\lbrack0,1]$ and density operators $\xi_{0}$ and $\xi_{1}$ and
where we define $h_{2}(\lambda):=H(\{\lambda,1-\lambda\})$. Since we have
shown that the above inequality holds for an arbitrary state $\kappa_{X\overline
{A}B^{\prime}Y}$ arising from an LHS\ assemblage $\hat{\kappa}$, we can
conclude that%
\begin{multline}
\inf_{\hat{\kappa}\in\operatorname{LHS}}D(\zeta_{X\overline{A}B^{\prime}Y}%
\Vert\kappa_{X\overline{A}B^{\prime}Y})\geq-h_{2}(\varepsilon_{0}/\left[
1+\varepsilon_{0}\right]  )\label{eq:cont-low-bound}\\
+\frac{1}{1+\varepsilon_{0}}\inf_{\hat{\sigma}\in\operatorname{LHS}}%
D(\rho_{X\overline{A}B^{\prime}Y}\Vert\sigma_{X\overline{A}B^{\prime}Y})+\frac
{\varepsilon_{0}}{1+\varepsilon_{0}}\inf_{\hat{\omega}\in\operatorname{LHS}%
}D(\Delta_{X\overline{A}B^{\prime}Y}^{\prime}\Vert\omega_{X\overline{A}B^{\prime}Y}).
\end{multline}
Putting the bounds in \eqref{eq:cont-up-bound} and
\eqref{eq:cont-low-bound}\ together and multiplying by $1+\varepsilon_{0}$, we
conclude that%
\begin{align}
&  \!\!\!\!\!\!\inf_{\hat{\sigma}\in\operatorname{LHS}}D(\rho_{X\overline
{A}B^{\prime}Y}\Vert\sigma_{X\overline{A}B^{\prime}Y})+\varepsilon_{0}\inf
_{\hat{\omega}\in\operatorname{LHS}}D(\Delta_{X\overline{A}B^{\prime}Y}^{\prime
}\Vert\omega_{X\overline{A}B^{\prime}Y})-g(\varepsilon_{0})\nonumber\\
&  \leq\inf_{\hat{\sigma}\in\operatorname{LHS}}D(\theta_{X\overline{A}B^{\prime}%
Y}\Vert\sigma_{X\overline{A}B^{\prime}Y})+\varepsilon_{0}\inf_{\hat{\omega}%
\in\operatorname{LHS}}D(\Delta_{X\overline{A}B^{\prime}Y}\Vert\omega_{X\overline
{A}B^{\prime}Y})\\
&  \leq\inf_{\hat{\sigma}\in\operatorname{LHS}}D(\theta_{X\overline{A}B^{\prime}%
Y}\Vert\sigma_{X\overline{A}B^{\prime}Y})+\varepsilon_{0}\log_{2}|\overline{A}|,
\end{align}
where we have used that $g(\varepsilon_{0})=\left(  1+\varepsilon_{0}\right)
h_{2}(\varepsilon_{0}/\left[  1+\varepsilon_{0}\right]  )$ \cite{Shirokov2016}
and Proposition~\ref{prop:rel-ent-st-up-bnd}. By dropping the term
$\varepsilon_{0}\inf_{\hat{\omega}\in\operatorname{LHS}}D(\Delta_{X\overline
{A}B^{\prime}Y}^{\prime}\Vert\omega_{X\overline{A}B^{\prime}Y})$ (it is
non-negative), we can rewrite the above bound as%
\begin{align}
\inf_{\hat{\sigma}\in\operatorname{LHS}}D(\rho_{X\overline{A}B^{\prime}Y}%
\Vert\sigma_{X\overline{A}B^{\prime}Y}) &  \leq\inf_{\hat{\sigma}\in
\operatorname{LHS}}D(\theta_{X\overline{A}B^{\prime}Y}\Vert\sigma_{X\overline
{A}B^{\prime}Y})+\varepsilon_{0}\log_{2}|\overline{A}|+g(\varepsilon_{0})\\
&  \leq\inf_{\hat{\sigma}\in\operatorname{LHS}}D(\theta_{X\overline{A}B^{\prime}%
Y}\Vert\sigma_{X\overline{A}B^{\prime}Y})+\varepsilon\log_{2}|\overline{A}%
|+g(\varepsilon),
\end{align}
where in the last line we have used the facts that $\varepsilon_{0}%
\leq\varepsilon$ and the function $\varepsilon\log_{2}|\overline{A}|+g(\varepsilon
)$ is monotone non-decreasing with respect to $\varepsilon$. Since the above
inequality holds for an arbitrary 1W-LOCC\ operation $\{p_{X|Y},\{\mathcal{K}%
_{y}\}_{y}\}$, we can conclude that%
\begin{multline}
\sup_{\{p_{X|Y},\{\mathcal{K}_{y}\}_{y}\}}\inf_{\hat{\sigma}\in
\operatorname{LHS}}D(\rho_{X\overline{A}B^{\prime}Y}\Vert\sigma_{X\overline{A}B^{\prime
}Y})\\
\leq\sup_{\{p_{X|Y},\{\mathcal{K}_{y}\}_{y}\}}\inf_{\hat{\sigma}%
\in\operatorname{LHS}}D(\theta_{X\overline{A}B^{\prime}Y}\Vert\sigma_{X\overline
{A}B^{\prime}Y})+\varepsilon\log_{2}|\overline{A}|+g(\varepsilon),
\end{multline}
which is the same as%
\begin{equation}
R_{S}(\overline{A};B)_{\hat{\rho}}\leq R_{S}(\overline{A};B)_{\hat{\theta}}%
+\varepsilon\log_{2}|\overline{A}|+g(\varepsilon).
\end{equation}
To get the other inequality $R_{S}(\overline{A};B)_{\hat{\theta}}\leq R_{S}(\overline
{A};B)_{\hat{\rho}}+\varepsilon\log_{2}|\overline{A}|+g(\varepsilon)$, we simply
repeat all of the above steps with $\hat{\rho}$ and $\hat{\theta}$ swapped.
\end{proof}

\subsection{Faithfulness\label{sec:faithful}}

A steering quantifier is \textit{faithful} if it is equal to zero if and only
if the assemblage has a local-hidden-state model. In this section, we prove
quantitative statements regarding the faithfulness of relative entropy of
steering. We begin with the implication $\hat{\rho}\in\operatorname{LHS}%
\Longrightarrow R_{S}(\overline{A};B)_{\hat{\rho}}=0$.

\begin{proposition}
Let $\varepsilon\in\left[  0,1\right]  $, and let $\{\hat{\rho}_{B}%
^{a,x}\}_{a,x}$ and $\{\hat{\sigma}_{B}^{a,x}\}_{a,x}$ be assemblages such
that $\hat{\sigma}\in\operatorname{LHS}$ and $\Delta(\hat{\rho},\hat{\sigma
})\leq\varepsilon$. Then%
\begin{equation}
R_{S}(\overline{A};B)_{\hat{\rho}}\leq\varepsilon\log_{2}|\overline{A}|+g(\varepsilon).
\end{equation}

\end{proposition}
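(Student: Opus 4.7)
The plan is to realize this proposition as an immediate corollary of the uniform continuity theorem (Theorem~\ref{thm:continuity}) once we establish that the relative entropy of steering vanishes on LHS assemblages. So there are two steps: first a short observation that $R_{S}(\overline{A};B)_{\hat{\sigma}}=0$ whenever $\hat{\sigma}\in\operatorname{LHS}$, and then a one-line application of continuity.

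For the first step, I would fix an arbitrary 1W-LOCC operation $\{p_{X|Y},\{\mathcal{K}_{y}\}_{y}\}$ in the outer supremum of Definition~\ref{def:rel-ent-steer} and, in the inner infimum over $\operatorname{LHS}$, simply take the trial LHS assemblage to be $\hat{\sigma}$ itself. Since $\sigma_{X\overline{A}B^{\prime}Y}$ is built from $\hat{\sigma}$ by exactly the same 1W-LOCC as the state we are comparing against, the two states in the relative entropy coincide and $D(\sigma_{X\overline{A}B^{\prime}Y}\Vert\sigma_{X\overline{A}B^{\prime}Y})=0$. Combined with non-negativity of quantum relative entropy, this forces the infimum to be zero for every 1W-LOCC, hence the supremum is zero as well, so $R_{S}(\overline{A};B)_{\hat{\sigma}}=0$.

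For the second step, I would invoke Theorem~\ref{thm:continuity} applied to the pair $\hat{\rho},\hat{\sigma}$, using the hypothesis $\Delta(\hat{\rho},\hat{\sigma})\leq\varepsilon$, to obtain
\begin{equation}
\left\vert R_{S}(\overline{A};B)_{\hat{\rho}}-R_{S}(\overline{A};B)_{\hat{\sigma}}\right\vert \leq\varepsilon\log_{2}|\overline{A}|+g(\varepsilon).
\end{equation}
Substituting $R_{S}(\overline{A};B)_{\hat{\sigma}}=0$ from the first step and noting that $R_{S}(\overline{A};B)_{\hat{\rho}}\geq 0$ yields the desired bound.

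There is no real obstacle here; the only thing to double-check is the vanishing-on-LHS claim, and this is immediate from the definition rather than requiring the Gallego--Aolita closure theorem used elsewhere (that closure theorem is needed when the LHS assemblage is further processed by a 1W-LOCC, but here we can avoid it by choosing the trial assemblage in the infimum to be $\hat{\sigma}$ itself).
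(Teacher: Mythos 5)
Your proposal is correct and is essentially identical to the paper's own proof, which likewise deduces the bound from Theorem~\ref{thm:continuity} together with the observation that $R_{S}(\overline{A};B)_{\hat{\sigma}}=0$ for $\hat{\sigma}\in\operatorname{LHS}$, the latter following directly from Definition~\ref{def:rel-ent-steer} by taking the trial assemblage in the infimum to be $\hat{\sigma}$ itself. Your parenthetical remark that the Gallego--Aolita closure theorem is not needed here is also accurate.
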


\begin{proof}
This is a direct consequence of Proposition~\ref{thm:continuity}\ and the fact
that $\hat{\sigma}\in\operatorname{LHS}$ so that by applying
Definition~\ref{def:rel-ent-steer}, we see that $R_{S}(\overline{A};B)_{\hat
{\sigma}}=0$.
\end{proof}

\bigskip We now establish the implication $R_{S}(\overline{A};B)_{\hat{\rho}%
}=0\Longrightarrow\hat{\rho}\in\operatorname{LHS}$:

\begin{proposition}
Let $\{\hat{\rho}_{B}^{a,x}\}_{a,x}$ be an assemblage. Then%
\begin{equation}
\sqrt{2\ln2\ R_{S}(\overline{A};B)_{\hat{\rho}}}\geq\inf_{\{\hat{\sigma}_{B}%
^{a,x}\}_{a,x}\in\operatorname{LHS}}\frac{1}{\left\vert \mathcal{X}\right\vert
}\sum_{x,a}\left\Vert \hat{\rho}_{B}^{a,x}-\hat{\sigma}_{B}^{a,x}\right\Vert
_{1}.\label{eq:faithful-conv}%
\end{equation}
In particular, if $R_{S}(\overline{A};B)_{\hat{\rho}}=0$, then $\{\hat{\rho}%
_{B}^{a,x}\}_{a,x}\in\operatorname{LHS}$.
\end{proposition}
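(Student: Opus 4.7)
The plan is to reduce the statement to the quantum Pinsker inequality by restricting the supremum in the definition of $R_{S}(\overline{A};B)_{\hat{\rho}}$ to a single, carefully chosen 1W-LOCC strategy.

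First, I would specialize the 1W-LOCC operation by taking $Y$ to be a trivial one-element classical system, each $\mathcal{K}_{y}$ to be the identity channel on $B$, and $p_{X}$ to be the uniform distribution on $\mathcal{X}$. For any LHS assemblage $\{\hat{\sigma}_{B}^{a,x}\}_{a,x}$, this produces a pair of classical--classical--quantum states of the form $\tfrac{1}{|\mathcal{X}|}\sum_{x,a}[x]\otimes[a]\otimes\hat{\rho}_{B}^{a,x}$ and $\tfrac{1}{|\mathcal{X}|}\sum_{x,a}[x]\otimes[a]\otimes\hat{\sigma}_{B}^{a,x}$. Because of the block-diagonal structure in $X$ and $\overline{A}$, their trace distance factorizes exactly as $\tfrac{1}{|\mathcal{X}|}\sum_{x,a}\Vert\hat{\rho}_{B}^{a,x}-\hat{\sigma}_{B}^{a,x}\Vert_{1}$, which is precisely the quantity appearing on the right-hand side of \eqref{eq:faithful-conv}.

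Second, applying the quantum Pinsker inequality $D(\rho\Vert\sigma)\geq\tfrac{1}{2\ln2}\Vert\rho-\sigma\Vert_{1}^{2}$ to this pair, taking an infimum over LHS assemblages on both sides, and using the elementary fact that $\inf f^{2}=(\inf f)^{2}$ for a nonnegative function $f$ (since squaring is monotone on $[0,\infty)$), yields a lower bound on the inner infimum in the definition of $R_{S}$ for this particular 1W-LOCC choice. Since the supremum over all 1W-LOCC strategies is at least the value attained at this specific strategy, and then taking a square root preserves the inequality, one recovers \eqref{eq:faithful-conv}.

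For the \emph{in particular} clause, if $R_{S}(\overline{A};B)_{\hat{\rho}}=0$ then the infimum on the right of \eqref{eq:faithful-conv} vanishes, producing a sequence of LHS assemblages whose aggregate trace-norm distance to $\hat{\rho}$ tends to zero. Since the LHS set is closed in finite dimensions (it is the convex hull of a compact family of product assemblages $p_{\overline{A}|X\Lambda}(a|x,\lambda)\rho_{B}^{\lambda}$), the infimum is attained by some $\{\hat{\sigma}_{B}^{a,x}\}_{a,x}\in\operatorname{LHS}$ with $\hat{\sigma}_{B}^{a,x}=\hat{\rho}_{B}^{a,x}$ for all $a,x$, so $\hat{\rho}\in\operatorname{LHS}$. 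There is no serious obstacle here; the only delicate points are recognizing that restriction to the trivial-$Y$, identity-$\mathcal{K}$, uniform-$p_{X}$ strategy already suffices to invoke Pinsker with the stated constant, and invoking closedness of LHS to upgrade the infimum attainment in the faithfulness implication.
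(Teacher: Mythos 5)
Your proposal is correct and follows essentially the same route as the paper's proof: restrict to the trivial 1W-LOCC strategy (uniform $p_{X}$, identity channel on $B$), apply the quantum Pinsker inequality, use the block-diagonal factorization of the trace norm, and invoke compactness/closedness of $\operatorname{LHS}$ to upgrade the vanishing infimum to attainment. The only cosmetic difference is the order in which you restrict the supremum versus apply Pinsker, which does not change the argument.
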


\begin{proof}
The inequality in \eqref{eq:faithful-conv} is a direct consequence of the
quantum Pinsker inequality \cite[Theorem~1.15]{OP93}, which is the statement
that%
\begin{equation}
D(\omega\Vert\tau)\geq\frac{1}{2\ln2}\left\Vert \omega-\tau\right\Vert
_{1}^{2},
\end{equation}
for quantum states $\omega$ and $\tau$. Applying it and definitions, we find
that%
\begin{align}
&  \!\!\!\!\!\!\sqrt{2\ln2\ R_{S}(\overline{A};B)_{\hat{\rho}}}\nonumber\\
&  \geq\sup_{\{p_{X|Y},\{\mathcal{K}_{y}\}_{y}\}}\inf_{\{\hat{\sigma}%
_{B}^{a,x}\}_{a,x}\in\operatorname{LHS}}\left\Vert \rho_{XAB^{\prime}Y}%
-\sigma_{XAB^{\prime}Y}\right\Vert _{1}\\
&  \geq\inf_{\{\hat{\sigma}_{B}^{a,x}\}_{a,x}\in\operatorname{LHS}}\left\Vert
\frac{1}{\left\vert \mathcal{X}\right\vert }\sum_{x,a}[x]\otimes\lbrack
a]\otimes\hat{\rho}_{B}^{a,x}-\frac{1}{\left\vert \mathcal{X}\right\vert }%
\sum_{x,a}[x]\otimes\lbrack a]\otimes\hat{\sigma}_{B}^{a,x}\right\Vert _{1}\\
&  =\inf_{\{\hat{\sigma}_{B}^{a,x}\}_{a,x}\in\operatorname{LHS}}\frac
{1}{\left\vert \mathcal{X}\right\vert }\sum_{x,a}\left\Vert \hat{\rho}%
_{B}^{a,x}-\hat{\sigma}_{B}^{a,x}\right\Vert _{1},
\end{align}
where the second inequality follows by picking a 1W-LOCC\ operation to be
trivial, consisting of choosing the input $x$ uniformly at random and applying
the identity channel to system$~B$.

To get the implication $R_{S}(\overline{A};B)_{\hat{\rho}}=0\Longrightarrow
\{\hat{\rho}_{B}^{a,x}\}_{a,x}\in\operatorname{LHS}$, consider that the trace
distance is continuous and the set $\operatorname{LHS}$\ is compact, so that
the infimum can be replaced with a minimum and thus in the case that
$R_{S}(\overline{A};B)_{\hat{\rho}}=0$, we can conclude that there exists
$\{\hat{\sigma}_{B}^{a,x}\}_{a,x}\in\operatorname{LHS}$ such that $\hat{\rho
}_{B}^{a,x}=\hat{\sigma}_{B}^{a,x}$ for all $a$ and $x$.
\end{proof}

\section{Restricted relative entropy of steering}

In this section, we define the restricted relative entropy of steering and establish several of its properties. As discussed in \cite{KWW16} and reviewed in the introduction, this quantity is motivated by the fact that a restricted class of
1W-LOCC\ operations might have more relevance in practical scenarios, in
which classical communication from Bob to Alice reaches Alice only after she
obtains the output of her black box. We begin by defining the restricted relative entropy of steering as follows:

\begin{definition}
[Restricted relative entropy of steering]Let $\{\hat{\rho}_{B}^{a,x}\}_{a,x}$
be an assemblage. Then the restricted relative entropy of steering is given by%
\begin{equation}
R_{S}^{R}(\overline{A};B)_{\hat{\rho}}:=\sup_{p_{X}}\inf_{\{\hat{\sigma}_{B}%
^{a,x}\}_{a,x}\in\operatorname{LHS}}D(\rho_{X\overline{A}B}\Vert\sigma_{X\overline{A}%
B}),
\end{equation}
where%
\begin{align}
\rho_{X\overline{A}B} &  :=\sum_{x,a}p_{X}(x)|x\rangle\langle x|_{X}\otimes
|a\rangle\langle a|_{\overline{A}}\otimes\hat{\rho}_{B}^{a,x},\\
\sigma_{X\overline{A}B} &  :=\sum_{x,a}p_{X}(x)|x\rangle\langle x|_{X}%
\otimes|a\rangle\langle a|_{\overline{A}}\otimes\hat{\sigma}_{B}^{a,x}.
\end{align}

\end{definition}

We first note that an exchange of the optimizations is possible for restricted
relative entropy of steering, due to its simpler form:

\begin{proposition}
Let $\{\hat{\rho}_{B}^{a,x}\}_{a,x}$ be an assemblage. Then%
\begin{equation}
R_{S}^{R}(\overline{A};B)_{\hat{\rho}}=\inf_{\{\hat{\sigma}_{B}^{a,x}\}_{a,x}%
\in\operatorname{LHS}}\sup_{p_{X}}D(\rho_{X\overline{A}B}\Vert\sigma_{X\overline{A}B}).
\end{equation}

\end{proposition}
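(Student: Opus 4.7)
The plan is to exhibit enough convex-analytic structure in the objective function to invoke Sion's minimax theorem, which will then immediately yield the claimed exchange $\sup_{p_{X}}\inf_{\hat{\sigma}}=\inf_{\hat{\sigma}}\sup_{p_{X}}$. Note that the easy direction $\sup_{p_{X}}\inf_{\hat{\sigma}}\,D\leq\inf_{\hat{\sigma}}\sup_{p_{X}}\,D$ is automatic from the usual weak-duality argument, so the nontrivial content is the reverse inequality.

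The first step is to identify the key structural feature: linearity of the objective in $p_{X}$. Since the $X$ register carries exactly the same distribution $p_{X}$ in both $\rho_{X\overline{A}B}$ and $\sigma_{X\overline{A}B}$, the property of relative entropy recalled in \eqref{eq:rel-ent-block-prop} reduces the objective to
\begin{equation}
D(\rho_{X\overline{A}B}\Vert\sigma_{X\overline{A}B})=\sum_{x}p_{X}(x)\,D(\rho_{\overline{A}B}^{x}\Vert\sigma_{\overline{A}B}^{x}),
\end{equation}
where $\rho_{\overline{A}B}^{x}:=\sum_{a}|a\rangle\langle a|_{\overline{A}}\otimes\hat{\rho}_{B}^{a,x}$ and $\sigma_{\overline{A}B}^{x}$ is analogous; these are genuine density operators because $\sum_{a}\operatorname{Tr}(\hat{\rho}_{B}^{a,x})=1$ by the no-signaling constraint. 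So for each fixed $\hat{\sigma}$ the objective is linear (and hence both quasi-concave and continuous) in $p_{X}$, which lives in the probability simplex over the finite alphabet $\mathcal{X}$, a compact convex set.

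The second step is to record the complementary properties in $\hat{\sigma}$: convexity of the objective follows from joint convexity of quantum relative entropy together with convexity of $\operatorname{LHS}$ (the latter already exploited in Section~\ref{sec:convexity}), and lower semicontinuity of the relative entropy in its second argument is standard. The set $\operatorname{LHS}$ is a bounded closed convex subset of a finite-dimensional real vector space (assemblages being constrained by $\sum_{a}\hat{\rho}_{B}^{a,x}=\rho_{B}$ for a density operator $\rho_{B}$), hence compact.

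With these hypotheses in place, Sion's minimax theorem delivers the desired equality. The step I expect to be the most delicate is verifying the semicontinuity side of the hypotheses cleanly, since the quantum relative entropy can take the value $+\infty$ when supports are mismatched. This is handled by observing that $\operatorname{LHS}$ always contains elements of the form $\{p_{\overline{A}|X}(a|x)\,\pi_{B}\}_{a,x}$ with $\pi_{B}$ of full rank (e.g., the maximally mixed state with $p_{\overline{A}|X}(a|x)=\operatorname{Tr}(\hat{\rho}_{B}^{a,x})$), so the inner infimum is uniformly finite and the analysis reduces to lower semicontinuity on a compact set where the function is finite. Modulo that bookkeeping, the exchange is immediate.
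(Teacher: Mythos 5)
Your proposal is correct and follows essentially the same route as the paper: both use the block property \eqref{eq:rel-ent-block-prop} to write the objective as $\sum_{x}p_{X}(x)D(\rho_{\overline{A}B}^{x}\Vert\sigma_{\overline{A}B}^{x})$, observe linearity in $p_{X}$ and convexity in $\hat{\sigma}$ from joint convexity of relative entropy, and invoke the Sion minimax theorem. Your extra care about semicontinuity and the possible $+\infty$ values is a welcome refinement that the paper's proof leaves implicit.
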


\begin{proof}
We can use \eqref{eq:rel-ent-block-prop}\ to rewrite $D(\rho_{X\overline{A}B}%
\Vert\sigma_{X\overline{A}B})$ as follows:%
\begin{equation}
D(\rho_{X\overline{A}B}\Vert\sigma_{X\overline{A}B})=\sum_{x}p_{X}(x)D(\hat{\rho}%
_{\overline{A}B}^{x}\Vert\hat{\sigma}_{\overline{A}B}^{x}),
\end{equation}
where%
\begin{equation}
\rho_{\overline{A}B}^{x}:=\sum_{a}|a\rangle\langle a|_{\overline{A}}\otimes\hat{\rho
}_{B}^{a,x},\qquad\sigma_{\overline{A}B}^{x}:=\sum_{a}|a\rangle\langle a|_{\overline{A}%
}\otimes\hat{\sigma}_{B}^{a,x}.
\end{equation}
After doing so, we see that the function $D(\rho_{X\overline{A}B}\Vert\sigma
_{X\overline{A}B})$ being optimized is linear in $p_{X}$ and convex in $\hat
{\sigma}_{B}^{a,x}$, the latter due to the well known joint convexity of
relative entropy (see, e.g., \cite{W15book}).\ So the Sion minimax theorem
\cite{S58}\ applies and allows for an exchange of the optimizations.
\end{proof}

The restricted relative entropy of steering obeys many properties similar to
those of the relative entropy of steering, and we mostly list them below
without proof because their proofs follow quite similarly to what we have
shown previously (i.e., in some cases, a proof seems necessary and so we give
it, while in others, a proof is an immediate consequence of prior developments
and so we do not give it).

The first is the following:

\begin{theorem}
[Restricted 1W-LOCC monotone]Let $\{\hat{\rho}_{B}^{a,x}\}_{a,x}$ be an
assemblage, and let
\begin{equation}
\{p_{X|X_{f}},p_{\overline{A}_{f}|\overline{A}XX_{f}Z},\{\mathcal{K}_{z}\}_{z}\}
\end{equation}
denote a restricted 1W-LOCC\ operation that results in an assemblage
$\{\hat{\omega}_{B^{\prime}}^{a_{f},x_{f}}\}_{a_{f},x_{f}}$, defined as%
\begin{equation}
\hat{\omega}_{B^{\prime}}^{a_{f},x_{f}}:=\sum_{a,x,z}p_{X|X_{f}}%
(x|x_{f})p_{\overline{A}_{f}|\overline{A}XX_{f}Z}(a_{f}|a,x,x_{f},z)\mathcal{K}_{z}%
(\hat{\rho}_{B}^{a,x}).
\end{equation}
Then%
\begin{equation}
R_{S}^{R}(\overline{A};B)_{\hat{\rho}}\geq R_{S}^{R}(\overline{A}_{f};B^{\prime}%
)_{\hat{\omega}}.
\end{equation}

\end{theorem}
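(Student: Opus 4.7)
The plan is to follow the same template as the monotonicity proof for $R_S$ in Section~\ref{sec:monotone}, but with the simpler states $\rho_{X\overline{A}B}$ replacing $\rho_{X\overline{A}B^{\prime}Y}$, since the restricted setting forbids Bob-to-Alice communication before Alice chooses her input $x$. First I would fix an arbitrary input distribution $p_{X_f}$ on $X_f$ and an arbitrary LHS assemblage $\{\hat{\sigma}_B^{a,x}\}_{a,x}$, and then define $\{\hat{\tau}_{B^{\prime}}^{a_f,x_f}\}_{a_f,x_f}$ as the result of applying the same restricted 1W-LOCC operation to $\{\hat{\sigma}\}$. Since a restricted 1W-LOCC is a particular 1W-LOCC, \cite[Theorem~1]{Gallego2015} guarantees that $\{\hat{\tau}\}$ is again LHS. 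Let $\omega_{X_f\overline{A}_f B^{\prime}}$ and $\tau_{X_f\overline{A}_f B^{\prime}}$ denote the classical-quantum states built from $(p_{X_f}, \hat{\omega})$ and $(p_{X_f}, \hat{\tau})$, respectively, in the manner prescribed by the definition of $R_S^R$.

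The heart of the argument would be to realize $D(\omega_{X_f\overline{A}_f B^{\prime}}\Vert\tau_{X_f\overline{A}_f B^{\prime}})$ as a data-processed version of $D(\rho_{X\overline{A}B}\Vert\sigma_{X\overline{A}B})$ for a well-chosen input distribution
\begin{equation}
p_X(x) := \sum_{x_f} p_{X_f}(x_f)\, p_{X|X_f}(x|x_f).
\end{equation}
Starting from the resulting $\rho_{X\overline{A}B}$ and $\sigma_{X\overline{A}B}$, I would build an extended classical-quantum state via the following sequence of quantum channels: (a)~append $X_f$ conditioned on $X$ using the classical channel $p_{X_f|X}(x_f|x) := p_{X_f}(x_f)p_{X|X_f}(x|x_f)/p_X(x)$; (b)~apply the measurement channel $\mathcal{M}_{B\to B^{\prime}Z}(\cdot) = \sum_z \mathcal{K}_z(\cdot)\otimes|z\rangle\langle z|$ on $B$; (c)~generate $\overline{A}_f$ from $(\overline{A},X,X_f,Z)$ through the classical channel $p_{\overline{A}_f|\overline{A}XX_fZ}$; and (d)~trace out $X$, $\overline{A}$, and $Z$. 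Since (a)--(d) compose to a single quantum channel acting identically on both $\rho_{X\overline{A}B}$ and $\sigma_{X\overline{A}B}$ and producing $\omega_{X_f\overline{A}_f B^{\prime}}$ and $\tau_{X_f\overline{A}_f B^{\prime}}$, the data-processing inequality for quantum relative entropy yields $D(\rho_{X\overline{A}B}\Vert\sigma_{X\overline{A}B}) \geq D(\omega_{X_f\overline{A}_f B^{\prime}}\Vert\tau_{X_f\overline{A}_f B^{\prime}})$.

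To conclude, I would take the infimum over LHS $\{\hat{\sigma}\}$ on both sides: on the right, the induced $\hat{\tau}$'s form a subset of $\operatorname{LHS}$, so the infimum is at least $\inf_{\hat{\tau}\in\operatorname{LHS}} D(\omega\Vert\tau)$; on the left, the infimum is at most $R_S^R(\overline{A};B)_{\hat{\rho}}$, since $p_X$ is a particular admissible distribution in the sup defining $R_S^R$. Taking the supremum over $p_{X_f}$ on the right then yields $R_S^R(\overline{A}_f;B^{\prime})_{\hat{\omega}}$ and establishes the claim. The main obstacle will be purely bookkeeping: ensuring that the induced $p_X$ together with the auxiliary conditional $p_{X_f|X}$ correctly lifts $\rho_{X\overline{A}B}$ to an extension whose further processing reproduces exactly the state $\omega_{X_f\overline{A}_f B^{\prime}}$ appearing in $R_S^R(\overline{A}_f;B^{\prime})_{\hat{\omega}}$, and that the same channel sequence is applied uniformly to the ``actual'' and ``LHS'' branches. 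No min-max exchange or convexity subtlety arises, since the $\sup\inf$ order is aligned on both sides of the inequality, so once the channel realization is in place the argument reduces to index tracking.
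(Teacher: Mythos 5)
Your proposal is correct and follows essentially the same route as the paper's proof: both hinge on the Bayes-theorem identity $p_{X_f}p_{X|X_f}=p_{X_f|X}\,p_X$ to realize the state $\omega_{X_f\overline{A}_fB^{\prime}}$ (and its LHS counterpart) as the output of a single channel applied to $\rho_{X\overline{A}B}$ (respectively $\sigma_{X\overline{A}B}$) with the induced input distribution $p_X$, followed by data processing, the observation via \cite[Theorem~1]{Gallego2015} that the image of an LHS assemblage is LHS, and the same order of taking the infimum and suprema. The only cosmetic difference is that you package steps (a)--(d) as one composite channel, whereas the paper tracks the intermediate extended states explicitly; the content is identical.
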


\begin{proof}
Taking a distribution $p_{X_{f}}$ over the black-box inputs of the final
assemblage, we can embed the state of the final assemblage into the following
classical--quantum state:%
\begin{equation}
\omega_{X_{f}\overline{A}_{f}B^{\prime}}:=\sum_{x_{f},a_{f}}p_{X_{f}}(x_{f}%
)[x_{f}]\otimes\lbrack a_{f}]\otimes\hat{\omega}_{B^{\prime}}^{a_{f},x_{f}%
},\label{eq:cq-final-state-RRS}%
\end{equation}
which is a marginal of the following state:%
\begin{multline}
\omega_{X_{f}X\overline{A}_{f}\overline{A}ZB^{\prime}}:=\sum_{x_{f},a_{f},a,x,z}%
p_{X_{f}}(x_{f})[x_{f}]\otimes p_{X|X_{f}}(x|x_{f})[x]\\
\otimes p_{\overline{A}_{f}|\overline{A}XX_{f}Z}(a_{f}|a,x,x_{f},z)[a_{f}]\otimes\lbrack
a]\otimes\lbrack z]\otimes\mathcal{K}_{z}(\hat{\rho}_{B}^{a,x}).
\end{multline}
Let $\hat{\sigma}_{B^{\prime}}^{a_{f},x_{f}}$ denote an arbitrary LHS
assemblage, and let $\sigma_{X_{f}\overline{A}_{f}B^{\prime}}$ denote its
corresponding classical--quantum state:%
\begin{equation}
\sigma_{X_{f}\overline{A}_{f}B^{\prime}}:=\sum_{x_{f},a_{f}}p_{X_{f}}(x_{f}%
)[x_{f}]\otimes\lbrack a_{f}]\otimes\hat{\sigma}_{B^{\prime}}^{a_{f},x_{f}}.
\end{equation}
Let $\hat{\tau}_{B}^{a,x}$ denote an arbitrary LHS assemblage, and let
$\tau_{X_{f}X\overline{A}_{f}\overline{A}ZB^{\prime}}$ denote the following state:%
\begin{multline}
\tau_{X_{f}X\overline{A}_{f}\overline{A}ZB^{\prime}}:=\sum_{x_{f},a_{f},a,x,z}p_{X_{f}%
}(x_{f})[x_{f}]\otimes p_{X|X_{f}}(x|x_{f})[x]\\
\otimes p_{\overline{A}_{f}|\overline{A}XX_{f}Z}(a_{f}|a,x,x_{f},z)[a_{f}]\otimes\lbrack
a]\otimes\lbrack z]\otimes\mathcal{K}_{z}(\hat{\tau}_{B}^{a,x}).
\end{multline}
Define the following states:%
\begin{align}
\omega_{X_{f}X\overline{A}B}^{\prime} &  :=\sum_{x_{f},a_{f},a,x,z}p_{X_{f}}%
(x_{f})[x_{f}]\otimes p_{X|X_{f}}(x|x_{f})[x]\otimes\lbrack a]\otimes\hat
{\rho}_{B}^{a,x},\\
\tau_{X_{f}X\overline{A}B}^{\prime} &  :=\sum_{x_{f},a_{f},a,x,z}p_{X_{f}}%
(x_{f})[x_{f}]\otimes p_{X|X_{f}}(x|x_{f})[x]\otimes\lbrack a]\otimes\hat
{\tau}_{B}^{a,x}.
\end{align}
Consider that $\tau_{X_{f}\overline{A}_{f}B^{\prime}}$ corresponds to an
LHS\ assemblage by \cite[Theorem~1]{Gallego2015}. Then%
\begin{align}
\inf_{\{\hat{\sigma}_{B^{\prime}}^{a_{f},x_{f}}\}_{a_{f},x_{f}}\in
\operatorname{LHS}}D(\omega_{X_{f}\overline{A}_{f}B^{\prime}}\Vert\sigma_{X_{f}%
\overline{A}_{f}B^{\prime}}) &  \leq D(\omega_{X_{f}\overline{A}_{f}B^{\prime}}\Vert
\tau_{X_{f}\overline{A}_{f}B^{\prime}})\\
&  \leq D(\omega_{X_{f}X\overline{A}_{f}\overline{A}ZB^{\prime}}\Vert\tau_{X_{f}X\overline
{A}_{f}\overline{A}ZB^{\prime}})\\
&  =D(\omega_{X_{f}X\overline{A}ZB^{\prime}}\Vert\tau_{X_{f}X\overline{A}ZB^{\prime}})\\
&  \leq D(\omega_{X_{f}X\overline{A}B}^{\prime}\Vert\tau_{X_{f}X\overline{A}B}^{\prime
})\\
&  =D(\omega_{X\overline{A}B}^{\prime}\Vert\tau_{X\overline{A}B}^{\prime}).
\end{align}
The first inequality follows because $\tau_{X_{f}\overline{A}_{f}B^{\prime}}$
corresponds to a particular LHS\ assemblage. The second inequality follows
from the data-processing inequality. The equality follows due to
\begin{equation}
D(\omega_{X_{f}X\overline{A}_{f}\overline{A}ZB^{\prime}}\Vert\tau_{X_{f}X\overline{A}_{f}%
\overline{A}ZB^{\prime}})\leq D(\omega_{X_{f}X\overline{A}ZB^{\prime}}\Vert\tau
_{X_{f}X\overline{A}ZB^{\prime}}),
\end{equation}
which is a consequence of the fact that register $\overline{A}_{f}$ results from
processing the values in $X_{f}X\overline{A}Z$ according to $p_{\overline{A}_{f}|\overline
{A}XX_{f}Z}$, while the opposite inequality
\begin{equation}
D(\omega_{X_{f}X\overline{A}_{f}\overline{A}ZB^{\prime}}\Vert\tau_{X_{f}X\overline{A}_{f}%
\overline{A}ZB^{\prime}})\geq D(\omega_{X_{f}X\overline{A}ZB^{\prime}}\Vert\tau
_{X_{f}X\overline{A}ZB^{\prime}})
\end{equation}
follows because partial trace over $\overline{A}_{f}$ is a channel. The final
inequality again follows from data processing:\ We get $\omega_{X_{f}X\overline
{A}ZB^{\prime}}$ from $\omega_{X_{f}X\overline{A}B}^{\prime}$ and $\tau_{X_{f}%
X\overline{A}ZB^{\prime}}$ from $\tau_{X_{f}X\overline{A}B}^{\prime}$ by performing the
quantum channel $(\cdot)\rightarrow\sum_{z}[z]\otimes\mathcal{K}_{z}(\cdot)$
on system $B$. The final equality follows again from data
processing:\ $D(\omega_{X_{f}X\overline{A}B}^{\prime}\Vert\tau_{X_{f}X\overline{A}%
B}^{\prime})\geq D(\omega_{X\overline{A}B}^{\prime}\Vert\tau_{X\overline{A}B}^{\prime})$
because partial trace is a channel and $D(\omega_{X_{f}X\overline{A}B}^{\prime
}\Vert\tau_{X_{f}X\overline{A}B}^{\prime})\leq D(\omega_{X\overline{A}B}^{\prime}%
\Vert\tau_{X\overline{A}B}^{\prime})$ because we can apply the Bayes theorem to see
that $p_{X_{f}}p_{X|X_{f}}=p_{X_{f}|X}p_{X}$ and thus $X_{f}$ can be see to
arise from processing of $X$. Since we have shown that the inequality holds
for an arbitrary LHS\ assemblage, we can conclude that%
\begin{align}
\inf_{\{\hat{\sigma}_{B^{\prime}}^{a_{f},x_{f}}\}_{a_{f},x_{f}}\in
\operatorname{LHS}}D(\omega_{X_{f}\overline{A}_{f}B^{\prime}}\Vert\sigma_{X_{f}%
\overline{A}_{f}B^{\prime}}) &  \leq\inf_{\{\hat{\tau}_{B}^{a,x}\}_{a,x}%
\in\operatorname{LHS}}D(\omega_{X\overline{A}B}^{\prime}\Vert\tau_{X\overline{A}%
B}^{\prime})\\
&  \leq\sup_{p_{X}}\inf_{\{\hat{\tau}_{B}^{a,x}\}_{a,x}\in\operatorname{LHS}%
}D(\omega_{X\overline{A}B}^{\prime}\Vert\tau_{X\overline{A}B}^{\prime})\\
&  =R_{S}^{R}(\overline{A};B)_{\hat{\rho}}.
\end{align}
Since the above holds for an arbitrary distribution $p_{X_{f}}$, we can
conclude that%
\begin{equation}
\sup_{p_{X_{f}}}\inf_{\{\hat{\sigma}_{B^{\prime}}^{a_{f},x_{f}}\}_{a_{f}%
,x_{f}}\in\operatorname{LHS}}D(\omega_{X_{f}\overline{A}_{f}B^{\prime}}\Vert
\sigma_{X_{f}\overline{A}_{f}B^{\prime}})\leq R_{S}^{R}(\overline{A};B)_{\hat{\rho}},
\end{equation}
which is equivalent to the statement of the theorem.
\end{proof}

\begin{proposition}
[Convexity]Let $\lambda\in\lbrack0,1]$. Let $\{\hat{\rho}_{B}^{a,x}\}_{a,x}$
and $\{\hat{\theta}_{B}^{a,x}\}_{a,x}$ be two assemblages, and consider an
assemblage $\{\hat{\tau}_{B}^{a,x}:=\lambda\hat{\rho}_{B}^{a,x}+(1-\lambda
)\hat{\theta}_{B}^{a,x}\}_{a,x}$. The restricted relative entropy of steering
is convex in the following sense:%
\begin{equation}
R_{S}^{R}(\overline{A};B)_{\hat{\tau}}\leq\lambda R_{S}^{R}(\overline{A};B)_{\hat{\rho}%
}+(1-\lambda)R_{S}^{R}(\overline{A};B)_{\hat{\theta}}.
\end{equation}

\end{proposition}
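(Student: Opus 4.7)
The plan is to imitate the convexity proof given earlier in the paper for the (unrestricted) relative entropy of steering, but in the simpler setting where the only maximization is over $p_X$ rather than over all 1W-LOCC operations. So I would fix an arbitrary distribution $p_X$ and arbitrary LHS assemblages $\{\hat{\sigma}_B^{a,x}\}_{a,x}$ and $\{\hat{\omega}_B^{a,x}\}_{a,x}$, and build the classical--quantum states $\rho_{X\overline{A}B}$, $\theta_{X\overline{A}B}$, $\sigma_{X\overline{A}B}$, $\omega_{X\overline{A}B}$ from $\hat{\rho}$, $\hat{\theta}$, $\hat{\sigma}$, $\hat{\omega}$ respectively, using the same $p_X$ in each case.

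Next I would introduce an auxiliary classical flag register $Q$ and define
\begin{align*}
\zeta_{QX\overline{A}B} &:= \lambda\,|0\rangle\langle 0|_Q \otimes \rho_{X\overline{A}B} + (1-\lambda)\,|1\rangle\langle 1|_Q \otimes \theta_{X\overline{A}B},\\
\kappa_{QX\overline{A}B} &:= \lambda\,|0\rangle\langle 0|_Q \otimes \sigma_{X\overline{A}B} + (1-\lambda)\,|1\rangle\langle 1|_Q \otimes \omega_{X\overline{A}B}.
\end{align*}
Tracing over $Q$ yields $\zeta_{X\overline{A}B}=\sum_{x,a}p_X(x)[x]\otimes[a]\otimes\hat{\tau}_B^{a,x}$, the state associated with the assemblage $\hat{\tau}$, and $\kappa_{X\overline{A}B}$ is the state associated with the assemblage $\{\lambda\hat{\sigma}_B^{a,x}+(1-\lambda)\hat{\omega}_B^{a,x}\}_{a,x}$, which belongs to $\operatorname{LHS}$ by convexity of the set of LHS assemblages.

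Then the chain of inequalities is the obvious one: by the block property of relative entropy in \eqref{eq:rel-ent-block-prop} applied to the register $Q$,
\[
\lambda D(\rho_{X\overline{A}B}\Vert\sigma_{X\overline{A}B}) + (1-\lambda) D(\theta_{X\overline{A}B}\Vert\omega_{X\overline{A}B}) = D(\zeta_{QX\overline{A}B}\Vert\kappa_{QX\overline{A}B}),
\]
and by the data-processing inequality (tracing out $Q$) this is at least $D(\zeta_{X\overline{A}B}\Vert\kappa_{X\overline{A}B})$, which is itself at least $\inf_{\hat{\varsigma}\in\operatorname{LHS}} D(\zeta_{X\overline{A}B}\Vert\varsigma_{X\overline{A}B})$. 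Since $\hat{\sigma}$ and $\hat{\omega}$ were arbitrary LHS assemblages, taking infima on the right-hand side gives
\[
\inf_{\hat{\varsigma}\in\operatorname{LHS}} D(\zeta_{X\overline{A}B}\Vert\varsigma_{X\overline{A}B}) \leq \lambda \inf_{\hat{\sigma}\in\operatorname{LHS}} D(\rho_{X\overline{A}B}\Vert\sigma_{X\overline{A}B}) + (1-\lambda) \inf_{\hat{\omega}\in\operatorname{LHS}} D(\theta_{X\overline{A}B}\Vert\omega_{X\overline{A}B}).
\]

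Finally, I take the supremum over $p_X$ on both sides and split the supremum on the right using the standard inequality $\sup(f+g)\leq \sup f + \sup g$, which yields the claimed bound. There is no real obstacle here; the proof is essentially a verbatim simplification of the 1W-LOCC convexity argument, with the $p_X$ optimization playing the role previously played by the full 1W-LOCC optimization. The one small point to be explicit about is the convexity of $\operatorname{LHS}$ (so that $\kappa_{X\overline{A}B}$ really does come from an LHS assemblage), which is immediate from the definition since a convex combination of two LHS decompositions is again an LHS decomposition.
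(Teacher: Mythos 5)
Your proof is correct and is precisely the argument the paper intends: the paper states this proposition without proof, noting that it follows just as in the unrestricted case, and your write-up is exactly that adaptation (flag register $Q$, block property of relative entropy, data processing over $Q$, convexity of the $\operatorname{LHS}$ set, then infima and the supremum over $p_X$). Your explicit remark that $\kappa_{X\overline{A}B}$ arises from the LHS assemblage $\{\lambda\hat{\sigma}_B^{a,x}+(1-\lambda)\hat{\omega}_B^{a,x}\}_{a,x}$ is a point the paper's unrestricted proof leaves implicit, and it is worth stating.
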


The next proposition finds several upper bounds on the restricted relative
entropy of steering, one of which is in terms of the conditional mutual
information, defined for a tripartite state $\varsigma_{KLM}$ as
$I(K;L|M)_{\varsigma}:=H(KM)_{\varsigma}+H(LM)_{\varsigma}-H(M)_{\varsigma
}-H(KLM)_{\varsigma}$.

\begin{proposition}
[Upper bounds]\label{prop:rel-ent-st-up-bnd copy(1)}Let $\{\hat{\rho}%
_{B}^{a,x}\}_{a,x}$ be an assemblage. Then%
\begin{align}
R_{S}^{R}(\overline{A};B)_{\hat{\rho}} &  \leq\sup_{p_{X}}I(\overline{A};B|X)_{\rho
}\label{eq:RRS-upper-bnd-1}\\
&  \leq\min\left\{  \sup_{p_{X}}H(\overline{A}),H(B)_{\rho}\right\}
\label{eq:RRS-upper-bnd-2}\\
&  \leq\min\left\{  \log_{2}|\overline{A}|,\log_{2}\left\vert B\right\vert
\right\}  ,\label{eq:RRS-upper-bnd-3}%
\end{align}
where the conditional mutual information is with respect to the following
state:%
\begin{equation}
\rho_{X\overline{A}B^{\prime}Y}:=\sum_{x,a}p_{X}(x)[x]\otimes\lbrack a]\otimes
\hat{\rho}_{B}^{a,x},
\end{equation}
and the entropy $H(\overline{A})$\ is with respect to the probability distribution
$p_{\overline{A}}(a):=\sum_{x}p_{X}(x)\operatorname{Tr}(\hat{\rho}_{B}^{a,x})$.
\end{proposition}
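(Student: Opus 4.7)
The plan is to mirror the proof strategy of Proposition~\ref{prop:rel-ent-st-up-bnd}, but now working directly with the state $\rho_{X\overline{A}B}$ (no need to process $B$ through any $\mathcal{K}_y$ or adjoin $Y$, since the restricted setting has no classical back-communication from Bob). The chain of inequalities has three links, and I would attack them in order.

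For the first inequality, I will exhibit one particular LHS assemblage that turns $\sigma_{X\overline{A}B}$ into a useful product state. Define $p_{\overline{A}|X}(a|x) := \operatorname{Tr}(\hat{\rho}_B^{a,x})$ and $\rho_B := \sum_a \hat{\rho}_B^{a,x}$, which is independent of $x$ by no-signaling~\eqref{eq:no-sig-constr}. Then the assemblage
\begin{equation}
\{\xi_B^{a,x} := p_{\overline{A}|X}(a|x)\,\rho_B\}_{a,x}
\end{equation}
lies in $\operatorname{LHS}$ (it corresponds to a trivial hidden variable $\Lambda$ in the definition). The associated classical--quantum state factors as $\sum_{x,a}p_X(x)p_{\overline{A}|X}(a|x)[x]\otimes[a]\otimes\rho_B = \rho_{X\overline{A}}\otimes\rho_B$. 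Applying the block-diagonal property~\eqref{eq:rel-ent-block-prop} to $D(\rho_{X\overline{A}B}\Vert\rho_{X\overline{A}}\otimes\rho_B)$ with the blocks indexed by $x$ shows this quantity equals $\sum_x p_X(x) D(\rho_{\overline{A}B}^x \Vert \rho_{\overline{A}}^x\otimes\rho_B) = I(\overline{A};B|X)_\rho$, since the $B$-marginal of $\rho^x_{\overline{A}B}$ is $\rho_B$ for every $x$. Taking the infimum over LHS and then the sup over $p_X$ yields \eqref{eq:RRS-upper-bnd-1}.

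For \eqref{eq:RRS-upper-bnd-2}, I use the two standard upper bounds on the conditional mutual information of a state with a classical register. Because $\overline{A}$ is classical, $I(\overline{A};B|X)_\rho\leq H(\overline{A}|X)_\rho \leq H(\overline{A})_\rho$, and taking the sup over $p_X$ gives the first branch. For the other branch, $I(\overline{A};B|X)_\rho\leq H(B|X)_\rho = H(B)_\rho$, where the equality holds because $\rho_{XB}=\rho_X\otimes\rho_B$ by no-signaling, so $B$ is independent of $X$; crucially, $\rho_B$ does not depend on $p_X$, so the sup over $p_X$ leaves $H(B)_\rho$ intact. Taking the minimum of the two branches gives \eqref{eq:RRS-upper-bnd-2}.

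Finally, \eqref{eq:RRS-upper-bnd-3} is immediate from the dimension bounds $H(\overline{A})\leq\log_2|\overline{A}|$ and $H(B)_\rho\leq\log_2|B|$. I do not foresee any real obstacle here: the only nontrivial point is picking the right LHS assemblage in step~1, and the no-signaling constraint makes the natural candidate $\xi_B^{a,x}=p_{\overline{A}|X}(a|x)\rho_B$ both admissible and well-suited to produce the conditional mutual information on the nose.
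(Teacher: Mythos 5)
Your proposal is correct and follows essentially the same route as the paper: the same choice of LHS assemblage $\hat{\xi}_{B}^{a,x}=p_{\overline{A}|X}(a|x)\rho_{B}$ yielding the product state $\rho_{X\overline{A}}\otimes\rho_{B}$, with the only cosmetic difference being that you reach $I(\overline{A};B|X)_{\rho}$ directly via the block decomposition over $x$, whereas the paper identifies the relative entropy with $I(X\overline{A};B)_{\rho}$ and then uses the chain rule together with $I(X;B)_{\rho}=0$ from no-signaling. The remaining entropy and dimension bounds you supply are exactly the ``well known bounds'' the paper invokes.
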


\begin{proof}
We can choose an assemblage having a local-hidden state model to be as
follows:%
\begin{equation}
\{\hat{\xi}_{B}^{a,x}:=p_{\overline{A}|X}(a|x)\rho_{B}\}_{a,x},
\end{equation}
where $p_{\overline{A}|X}(a|x)=\operatorname{Tr}(\hat{\rho}_{B}^{a,x})$ and
$\rho_{B}=\sum_{a}\hat{\rho}_{B}^{a,x}$ (recall the no-signaling condition in
\eqref{eq:no-sig-constr}). Then define the following state:%
\begin{equation}
\xi_{X\overline{A}B}:=\sum_{x,a}p_{X}(x)[x]\otimes\lbrack a]\otimes\hat{\xi}%
_{B}^{a,x}=\left[  \sum_{x,a}p_{X}(x)p_{\overline{A}|X}(a|x)[x]\otimes\lbrack
a]\right]  \otimes\rho_{B}=\rho_{X\overline{A}}\otimes\rho_{B}.
\end{equation}
Consider that%
\begin{align}
\inf_{\{\hat{\sigma}_{B}^{a,x}\}_{a,x}\in\operatorname{LHS}}D(\rho_{X\overline{A}%
B}\Vert\sigma_{X\overline{A}B}) &  \leq D(\rho_{X\overline{A}B}\Vert\xi_{X\overline{A}B})\\
&  =I(X\overline{A};B)_{\rho}\\
&  =I(\overline{A};B|X)_{\rho}+I(X;B)_{\rho}\\
&  =I(\overline{A};B|X)_{\rho}.
\end{align}
The inequality follows because the state $\xi_{X\overline{A}B}$ arises from a
particular LHS\ assemblage. The first equality follows from the well known
characterization of mutual information as the relative entropy of the joint
state to the product of the marginals. The second equality follows from the
chain rule for conditional mutual information (see, e.g., \cite{W15book}), and
the last from the no-signaling condition in \eqref{eq:no-sig-constr}, so that
$I(X;B)_{\rho}=0$. Since the inequality holds for all distributions $p_{X}$,
we can take a supremum to arrive at \eqref{eq:RRS-upper-bnd-1}. The latter two
inequalities in \eqref{eq:RRS-upper-bnd-2} and \eqref{eq:RRS-upper-bnd-3}
follow from well known bounds on conditional mutual information (see, e.g.,
\cite{W15book}), and using that systems $\overline{A}$ and $X$ are classical.
\end{proof}

\begin{definition}
[Restricted trace distance of assemblages]Let $\{\hat{\rho}_{B}^{a,x}\}_{a,x}$
and $\{\hat{\theta}_{B}^{a,x}\}_{a,x}$ be two assemblages. We define the
restricted normalized trace distance of assemblages as%
\begin{equation}
\Delta^{R}(\hat{\rho},\hat{\theta}):=\frac{1}{2}\sup_{p_{X}}\left\Vert
\rho_{X\overline{A}B}-\theta_{X\overline{A}B}\right\Vert _{1},
\end{equation}
where%
\begin{align}
\rho_{X\overline{A}B} &  :=\sum_{x,a}p_{X}(x)[x]\otimes\lbrack a]\otimes\hat{\rho
}_{B}^{a,x},\\
\theta_{X\overline{A}B} &  :=\sum_{x,a}p_{X}(x)[x]\otimes\lbrack a]\otimes
\hat{\theta}_{B}^{a,x}.
\end{align}

\end{definition}

\begin{proposition}
[Metric]The restricted trace distance of assemblages is a metric.
\end{proposition}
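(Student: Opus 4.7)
My plan is to mirror, almost verbatim, the earlier proof that the (unrestricted) trace distance of assemblages is a metric, since the only structural change is that the optimization runs over distributions $p_X$ rather than over 1W-LOCC operations $\{p_{X|Y},\{\mathcal{K}_y\}_y\}$. The four metric axioms will be verified in turn, and each one reduces to a corresponding property of the trace norm on quantum states together with a standard supremum argument.

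First I would dispatch non-negativity and symmetry essentially for free: $\Delta^R(\hat{\rho},\hat{\theta})\geq 0$ because the trace norm is non-negative, and $\Delta^R(\hat{\rho},\hat{\theta})=\Delta^R(\hat{\theta},\hat{\rho})$ because $\|C\|_1=\|-C\|_1$ for any Hermitian $C$. For the identity of indiscernibles, the forward implication is immediate: if $\hat{\rho}_B^{a,x}=\hat{\theta}_B^{a,x}$ for all $a,x$, then $\rho_{X\overline{A}B}=\theta_{X\overline{A}B}$ for every $p_X$, so the supremum is zero. For the reverse implication, I would imitate the corresponding step in the earlier proof by selecting the uniform distribution $p_X(x)=1/|\mathcal{X}|$. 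With this choice, the classical--quantum structure gives
\begin{equation}
0 = \left\|\sum_{x,a}\tfrac{1}{|\mathcal{X}|}[x]\otimes[a]\otimes\hat{\rho}_B^{a,x} - \sum_{x,a}\tfrac{1}{|\mathcal{X}|}[x]\otimes[a]\otimes\hat{\theta}_B^{a,x}\right\|_1 = \sum_{x,a}\tfrac{1}{|\mathcal{X}|}\left\|\hat{\rho}_B^{a,x}-\hat{\theta}_B^{a,x}\right\|_1,
\end{equation}
and since trace distance on quantum states is a metric, each summand must vanish, giving $\hat{\rho}_B^{a,x}=\hat{\theta}_B^{a,x}$ for all $a,x$.

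For the triangle inequality, given a third assemblage $\{\hat{\omega}_B^{a,x}\}_{a,x}$ with associated state $\omega_{X\overline{A}B}$, I would apply the triangle inequality for trace norm inside the supremum and then split the resulting sup of a sum:
\begin{align}
\Delta^R(\hat{\rho},\hat{\theta}) &= \tfrac{1}{2}\sup_{p_X}\|\rho_{X\overline{A}B}-\theta_{X\overline{A}B}\|_1 \\
&\leq \tfrac{1}{2}\sup_{p_X}\left[\|\rho_{X\overline{A}B}-\omega_{X\overline{A}B}\|_1 + \|\omega_{X\overline{A}B}-\theta_{X\overline{A}B}\|_1\right] \\
&\leq \tfrac{1}{2}\sup_{p_X}\|\rho_{X\overline{A}B}-\omega_{X\overline{A}B}\|_1 + \tfrac{1}{2}\sup_{p_X}\|\omega_{X\overline{A}B}-\theta_{X\overline{A}B}\|_1 \\
&= \Delta^R(\hat{\rho},\hat{\omega}) + \Delta^R(\hat{\omega},\hat{\theta}).
\end{align}

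Honestly there is no real obstacle here: the simpler form of the restricted trace distance (just a sup over $p_X$) makes the proof strictly easier than the 1W-LOCC case already handled. The only thing worth taking care of is to use the same $p_X$ in all three classical--quantum embeddings $\rho_{X\overline{A}B}$, $\theta_{X\overline{A}B}$, $\omega_{X\overline{A}B}$ when invoking the triangle inequality under the supremum, which is legitimate because the supremum is taken only after the pointwise triangle inequality is applied.
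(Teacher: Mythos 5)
Your proof is correct and follows exactly the route the paper intends: the paper omits the proof precisely because it carries over verbatim from the unrestricted case, which is what you do (including the uniform-$p_X$ choice for the identity of indiscernibles and the pointwise triangle inequality before splitting the supremum). No gaps.
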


\begin{theorem}
[Uniform continuity]\label{thm:continuity copy(1)}Let $\{\hat{\rho}_{B}%
^{a,x}\}_{a,x}$ and $\{\hat{\theta}_{B}^{a,x}\}_{a,x}$ be two assemblages such
that $\Delta^{R}(\hat{\rho},\hat{\theta})\leq\varepsilon\in\left[  0,1\right]
$. Then%
\begin{equation}
\left\vert R_{S}^{R}(\overline{A};B)_{\hat{\rho}}-R_{S}^{R}(\overline{A};B)_{\hat
{\theta}}\right\vert \leq\varepsilon\log_{2}\min\{|\overline{A}%
|,|B|\}+g(\varepsilon),
\end{equation}
where $g(\varepsilon):=(\varepsilon+1)\log_{2}(\varepsilon+1)-\varepsilon
\log_{2}\varepsilon$.
\end{theorem}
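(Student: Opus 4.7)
The argument mirrors the proof of Theorem~\ref{thm:continuity} line by line, with only two structural changes: the supremum over 1W-LOCC operations is replaced by a supremum over input distributions $p_X$, and Proposition~\ref{prop:rel-ent-st-up-bnd} is replaced by its restricted counterpart Proposition~\ref{prop:rel-ent-st-up-bnd copy(1)}, which is what delivers the improved constant $\log_2\min\{|\overline{A}|,|B|\}$.

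Concretely, I fix an arbitrary $p_X$ together with arbitrary LHS assemblages $\{\hat{\sigma}_B^{a,x}\}_{a,x}$ and $\{\hat{\omega}_B^{a,x}\}_{a,x}$ and form the classical--quantum states $\rho_{X\overline{A}B}$, $\theta_{X\overline{A}B}$, $\sigma_{X\overline{A}B}$, $\omega_{X\overline{A}B}$ using $p_X$. Setting $\varepsilon_0:=\tfrac{1}{2}\|\rho_{X\overline{A}B}-\theta_{X\overline{A}B}\|_1\leq\varepsilon$, the case $\varepsilon_0=0$ is trivial; assuming $\varepsilon_0>0$, I define the two normalized positive parts $\Delta_{X\overline{A}B}:=\varepsilon_0^{-1}(\rho_{X\overline{A}B}-\theta_{X\overline{A}B})_+$ and $\Delta'_{X\overline{A}B}:=\varepsilon_0^{-1}(\theta_{X\overline{A}B}-\rho_{X\overline{A}B})_+$, together with the common convex combination $\zeta_{X\overline{A}B}:=(1+\varepsilon_0)^{-1}\theta_{X\overline{A}B}+\varepsilon_0(1+\varepsilon_0)^{-1}\Delta_{X\overline{A}B}=(1+\varepsilon_0)^{-1}\rho_{X\overline{A}B}+\varepsilon_0(1+\varepsilon_0)^{-1}\Delta'_{X\overline{A}B}$. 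Joint convexity of relative entropy (through the first decomposition of $\zeta_{X\overline{A}B}$) combined with the concavity-of-entropy lower bound $H(\lambda\xi_0+(1-\lambda)\xi_1)\leq h_2(\lambda)+\lambda H(\xi_0)+(1-\lambda)H(\xi_1)$ (through the second decomposition), followed by rearrangement and the identity $g(\varepsilon_0)=(1+\varepsilon_0)h_2(\varepsilon_0/(1+\varepsilon_0))$, produces the one-sided estimate
\begin{equation*}
\inf_{\hat{\sigma}\in\operatorname{LHS}}D(\rho_{X\overline{A}B}\|\sigma_{X\overline{A}B})\leq\inf_{\hat{\sigma}\in\operatorname{LHS}}D(\theta_{X\overline{A}B}\|\sigma_{X\overline{A}B})+\varepsilon_0\inf_{\hat{\omega}\in\operatorname{LHS}}D(\Delta_{X\overline{A}B}\|\omega_{X\overline{A}B})+g(\varepsilon_0).
\end{equation*}

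The only genuinely new step is to bound the middle term on the right. Since $\Delta_{X\overline{A}B}$ has exactly the classical--quantum block structure of a state associated with an assemblage on the input distribution $p_X$, Proposition~\ref{prop:rel-ent-st-up-bnd copy(1)} yields $\inf_{\hat{\omega}\in\operatorname{LHS}}D(\Delta_{X\overline{A}B}\|\omega_{X\overline{A}B})\leq\log_2\min\{|\overline{A}|,|B|\}$. Substituting, using $\varepsilon_0\leq\varepsilon$ and the monotonicity of $\varepsilon\mapsto\varepsilon\log_2\min\{|\overline{A}|,|B|\}+g(\varepsilon)$, and finally taking the supremum over $p_X$ on both sides delivers $R_S^R(\overline{A};B)_{\hat{\rho}}\leq R_S^R(\overline{A};B)_{\hat{\theta}}+\varepsilon\log_2\min\{|\overline{A}|,|B|\}+g(\varepsilon)$; the reverse inequality follows by swapping $\hat{\rho}$ and $\hat{\theta}$. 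The one subtle point I expect to be the main obstacle is verifying that the positive-part construction preserves the no-signaling condition so that Proposition~\ref{prop:rel-ent-st-up-bnd copy(1)} genuinely applies to $\Delta_{X\overline{A}B}$; if this is not automatic, the $\log_2|B|$ half of the bound can still be obtained directly by comparing $\Delta_{X\overline{A}B}$ to the explicit LHS-type product state $\Delta_{X\overline{A}}\otimes\Delta_B$ and invoking $I(X\overline{A};B)_\Delta\leq H(B)_\Delta\leq\log_2|B|$, with a parallel product-LHS construction of the form $\hat{\xi}_B^{a,x}:=q(a)\Delta_B$ (for an appropriate marginal $q$) used to recover the $\log_2|\overline{A}|$ piece.
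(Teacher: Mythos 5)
Your proposal matches the paper's intended argument: the paper states this theorem without proof, remarking only that it follows by repeating the proof of Theorem~\ref{thm:continuity} with the supremum over 1W-LOCC operations replaced by a supremum over $p_{X}$ and with Proposition~\ref{prop:rel-ent-st-up-bnd} replaced by Proposition~\ref{prop:rel-ent-st-up-bnd copy(1)} to obtain the constant $\log_{2}\min\{|\overline{A}|,|B|\}$, which is precisely what you do. The subtlety you flag---that $\Delta_{X\overline{A}B}$ need not arise from a valid normalized, no-signaling assemblage, so the dimension bound on $\inf_{\hat{\omega}\in\operatorname{LHS}}D(\Delta_{X\overline{A}B}\Vert\omega_{X\overline{A}B})$ needs a separate check---is genuine but is equally present (and glossed over) in the paper's own proof of Theorem~\ref{thm:continuity}; just be aware that your fallback comparison state $\Delta_{X\overline{A}}\otimes\Delta_{B}$ faces the same difficulty, since the $X$-marginal of $\Delta_{X\overline{A}B}$ need not equal $p_{X}$ and hence that product state need not itself be of the required LHS form.
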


\begin{proposition}
Let $\varepsilon\in\left[  0,1\right]  $, and let $\{\hat{\rho}_{B}%
^{a,x}\}_{a,x}$ and $\{\hat{\sigma}_{B}^{a,x}\}_{a,x}$ be assemblages such
that $\hat{\sigma}\in\operatorname{LHS}$ and $\Delta^{R}(\hat{\rho}%
,\hat{\sigma})\leq\varepsilon$. Then%
\begin{equation}
R_{S}^{R}(\overline{A};B)_{\hat{\rho}}\leq\varepsilon\log_{2}\min\{|\overline
{A}|,|B|\}+g(\varepsilon).
\end{equation}

\end{proposition}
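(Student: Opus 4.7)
The proposition is essentially a direct corollary of the preceding uniform continuity theorem (Theorem~\ref{thm:continuity copy(1)}), in complete analogy to how the faithfulness bound for $R_S$ in Section~\ref{sec:faithful} was obtained from Theorem~\ref{thm:continuity}. My plan is therefore to reduce the statement to two ingredients already established in the paper: (i) the uniform continuity bound with respect to the restricted trace distance $\Delta^R$, and (ii) the fact that the restricted relative entropy of steering vanishes on LHS assemblages.

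First, I would verify ingredient (ii) explicitly. Given $\hat{\sigma} \in \operatorname{LHS}$, take any distribution $p_X$ and form the classical--quantum state $\sigma_{X\overline{A}B} = \sum_{x,a} p_X(x)[x]\otimes[a]\otimes \hat{\sigma}_B^{a,x}$. Plugging the assemblage $\hat{\sigma}$ itself as a feasible choice in the infimum over LHS assemblages in the definition of $R_S^R(\overline{A};B)_{\hat{\sigma}}$ gives $D(\sigma_{X\overline{A}B}\Vert\sigma_{X\overline{A}B}) = 0$. Since the relative entropy is non-negative, the infimum equals zero for every $p_X$, and hence $R_S^R(\overline{A};B)_{\hat{\sigma}} = 0$.

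Second, I would invoke Theorem~\ref{thm:continuity copy(1)} directly. By hypothesis, $\Delta^R(\hat{\rho},\hat{\sigma}) \leq \varepsilon$, so the theorem yields
\begin{equation}
\left\vert R_S^R(\overline{A};B)_{\hat{\rho}} - R_S^R(\overline{A};B)_{\hat{\sigma}}\right\vert \leq \varepsilon \log_2 \min\{|\overline{A}|,|B|\} + g(\varepsilon).
\end{equation}
Combining this with $R_S^R(\overline{A};B)_{\hat{\sigma}} = 0$ from the first step, and using that $R_S^R(\overline{A};B)_{\hat{\rho}} \geq 0$, delivers the claimed bound.

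There is no real obstacle here: the only thing that needs to be checked carefully is that the uniform continuity theorem applies in this restricted setting, which it does by hypothesis on $\Delta^R$. The proof is thus a two-line corollary, and I would present it that way rather than restate the continuity argument, mirroring exactly how the analogous faithfulness bound for $R_S$ was derived earlier in Section~\ref{sec:faithful}.
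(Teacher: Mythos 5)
Your proposal is correct and matches the paper's approach: the paper proves the unrestricted analogue exactly this way (continuity plus $R_S(\overline{A};B)_{\hat{\sigma}}=0$ for $\hat{\sigma}\in\operatorname{LHS}$) and states the restricted version without proof precisely because it follows by the same two-line argument from Theorem~\ref{thm:continuity copy(1)}. Your explicit check that the infimum in the definition of $R_S^R(\overline{A};B)_{\hat{\sigma}}$ is achieved by $\hat{\sigma}$ itself is the right justification for the vanishing step.
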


\begin{proposition}
Let $\{\hat{\rho}_{B}^{a,x}\}_{a,x}$ be an assemblage. Then%
\begin{equation}
\sqrt{2\ln2\ R_{S}^{R}(\overline{A};B)_{\hat{\rho}}}\geq\inf_{\{\hat{\sigma}%
_{B}^{a,x}\}_{a,x}\in\operatorname{LHS}}\frac{1}{\left\vert \mathcal{X}%
\right\vert }\sum_{x,a}\left\Vert \hat{\rho}_{B}^{a,x}-\hat{\sigma}_{B}%
^{a,x}\right\Vert _{1}.
\end{equation}
In particular, if $R_{S}^{R}(\overline{A};B)_{\hat{\rho}}=0$, then $\{\hat{\rho
}_{B}^{a,x}\}_{a,x}\in\operatorname{LHS}$.
\end{proposition}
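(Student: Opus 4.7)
The plan is to mirror the faithfulness argument given for $R_S$ in Section~\ref{sec:faithful}, adapting it to the simpler restricted setting where the only optimization parameter is the input distribution $p_X$ rather than a full 1W-LOCC operation. The essential inputs are the quantum Pinsker inequality, $D(\omega\Vert\tau)\geq\frac{1}{2\ln 2}\|\omega-\tau\|_1^2$, together with the observation that a uniform $p_X$ over $\mathcal{X}$ is always an admissible choice in the supremum defining $R_{S}^{R}$.

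First I would start from the definition of $R_{S}^{R}(\overline{A};B)_{\hat\rho}$, apply the Pinsker inequality inside both the supremum and the infimum, and obtain
\begin{equation}
\sqrt{2\ln 2\, R_{S}^{R}(\overline{A};B)_{\hat{\rho}}}\;\geq\;\sup_{p_{X}}\inf_{\{\hat{\sigma}_{B}^{a,x}\}_{a,x}\in\operatorname{LHS}}\left\Vert \rho_{X\overline{A}B}-\sigma_{X\overline{A}B}\right\Vert _{1}.
\end{equation}
Here I would need to be slightly careful: Pinsker gives $\sqrt{2\ln 2\, D}\geq \|\cdot\|_1$ pointwise, so taking the infimum over LHS on both sides and then the supremum over $p_X$ preserves the inequality (the sup/inf on the right may be smaller than the sup/inf of the $L^1$-norm associated with an optimal relative-entropy LHS, but it is still a lower bound through the standard $\sup\inf$ monotonicity).

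Next I would specialize to the uniform distribution $p_{X}(x)=1/|\mathcal{X}|$ to drop the supremum, and then use the block-diagonal structure of $\rho_{X\overline{A}B}-\sigma_{X\overline{A}B}$ to rewrite
\begin{equation}
\left\Vert \tfrac{1}{|\mathcal{X}|}\sum_{x,a}[x]\otimes[a]\otimes(\hat{\rho}_{B}^{a,x}-\hat{\sigma}_{B}^{a,x})\right\Vert _{1}=\tfrac{1}{|\mathcal{X}|}\sum_{x,a}\|\hat{\rho}_{B}^{a,x}-\hat{\sigma}_{B}^{a,x}\|_{1},
\end{equation}
which is the standard identity for trace norm of a block-diagonal operator. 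Taking the infimum over LHS on the right then yields the claimed bound \eqref{eq:faithful-conv} (in the restricted version).

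For the faithfulness implication, assuming $R_{S}^{R}(\overline{A};B)_{\hat{\rho}}=0$, the inequality forces $\inf_{\hat\sigma\in\operatorname{LHS}}\tfrac{1}{|\mathcal{X}|}\sum_{x,a}\|\hat{\rho}_{B}^{a,x}-\hat{\sigma}_{B}^{a,x}\|_{1}=0$. I would then invoke compactness of the set $\operatorname{LHS}$ (it is a closed, bounded subset of a finite-dimensional operator space, defined by linear equality/inequality constraints) together with continuity of the trace distance to replace the infimum by a minimum, giving an actual $\{\hat{\sigma}_{B}^{a,x}\}\in\operatorname{LHS}$ which equals $\{\hat{\rho}_{B}^{a,x}\}$ entrywise. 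The only non-routine step is the Pinsker application combined with commuting the sup and inf on the right-hand side, but since we only need a lower bound we are free to restrict to any specific $p_X$; this is precisely the trick used in the unrestricted faithfulness proof, and it transfers verbatim.
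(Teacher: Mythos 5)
Your proposal is correct and follows essentially the same route as the paper: the paper omits an explicit proof for the restricted case, stating that it follows from the unrestricted argument in Section~\ref{sec:faithful}, and your adaptation (Pinsker inequality, restriction to the uniform input distribution, the block-diagonal trace-norm identity, and compactness of $\operatorname{LHS}$ for the faithfulness implication) is precisely that argument transferred to the restricted setting.
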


\section{Conclusion}

We provided a definition of relative entropy of steering different from that
in \cite{Gallego2015}, justifying it based on well grounded
information-theoretic and game-theoretic concerns. We showed how this modified
relative entropy of steering satisfies several desirable properties, including
convexity, steering monotonicity, continuity, and faithfulness. We also
considered a restricted relative entropy of steering, which is relevant as a
quantifier in an operational setting in which there might be further
restrictions on one-way local operations and classical communication, as
discussed previously in \cite{KWW16}. The restricted relative entropy of
steering is also convex, a steering monotone, continuous, and faithful. Going
forward, we suspect that the definitions proposed here will be relevant in
applications of steering, such as one-sided device-independent quantum key
distribution, but we leave this for future work.

\bigskip\textbf{Acknowledgements.} We are grateful to Rodrigo Gallego for
discussions related to the topic of this paper. EK acknowledges support from
the Department of Physics and Astronomy at LSU. MMW acknowledges support from
the NSF under Award No.~CCF-1350397.

\bibliographystyle{alpha}
\bibliography{steering}

\end{document}